\documentclass[draftcls,11pt,onecolumn]{IEEEtran}

\usepackage{times,amsmath,color,amssymb,graphicx,epsfig,cite,psfrag,enumerate}
\usepackage{subfigure,multirow,bm}
\newtheorem{Remark}{\it Remark}[section]
\newtheorem{Theorem}{\it Theorem}[section]
\newtheorem{Proposition}{\it Proposition}[section]
\newtheorem{Lemma}{\it Lemma}[section]
\newtheorem{Corollary}{\it Corollary}[section]
\newtheorem{Definition}{\it Definition}[section]

\begin{document}
%
% paper title
% can use linebreaks \\ within to get better formatting as desired

\title{Optimal Power Allocation for Outage Probability Minimization in Fading Channels with Energy Harvesting Constraints}

% author names and affiliations
% use a multiple column layout for up to three different
% affiliations
\author{\IEEEauthorblockN{Chuan Huang,~\IEEEmembership{Member,~IEEE},~Rui Zhang,~\IEEEmembership{Member,~IEEE},\\Shuguang
Cui},~\IEEEmembership{Senior Member,~IEEE}

\thanks{Chuan Huang and Shuguang Cui are with the Department of Electrical
and Computer Engineering, Texas A\&M University, College Station,
TX, 77843. Emails: \{huangch, cui\}@tamu.edu.

Rui Zhang is with the Department of Electrical and Computer Engineering, National University of Singapore, Singapore 117576. Email: elezhang@nus.edu.sg.
}}
%
%\author{\IEEEauthorblockN{Chuan Huang}
%\IEEEauthorblockA{Department of ECE\\
%Texas A\&M University\\
%College Station, TX 77843, USA\\
%Email: huangch@tamu.edu } \and \IEEEauthorblockN{Rui Zhang}
%\IEEEauthorblockA{Department of ECE\\
%National University of Singapore\\
%Singapore 117576\\
%Email: elezhang@nus.edu.sg} \and \IEEEauthorblockN{Shuguang Cui}
%\IEEEauthorblockA{Department of ECE\\
%Texas A\&M University\\
%College Station, TX 77843, USA\\
%Email: cui@ece.tamu.edu } }

% make the title area
\maketitle
\begin{abstract}
%\boldmath
This paper studies the optimal power allocation for outage probability minimization in point-to-point fading channels with the energy-harvesting constraints and channel distribution information (CDI) at the transmitter. Both the cases with non-causal and causal energy state information (ESI) are considered, which correspond to the energy-harvesting (EH) rates being known and unknown prior to the transmissions, respectively. For the non-causal ESI case, the average outage probability minimization problem over a finite horizon of $N$ EH periods is shown to be non-convex for a large class of practical fading channels. However, the globally optimal ``offline'' power allocation is obtained by a forward search algorithm with at most $N$ one-dimensional searches, and the optimal power profile is shown to be non-decreasing over time and have an interesting ``save-then-transmit'' structure. In particular, for the special case of $N=1$, our result revisits the classic outage capacity for fading channels with uniform power allocation. Moreover, for the case with causal ESI, we propose both the optimal and suboptimal ``online'' power allocation algorithms, by applying the technique of dynamic programming and exploring the structure of optimal offline solutions, respectively.
\end{abstract}

\begin{IEEEkeywords}
Energy harvesting, outage probability, fading channel, power control.
\end{IEEEkeywords}

% IEEEtran.cls defaults to using nonbold math in the Abstract.
% This preserves the distinction between vectors and scalars. However,
% if the conference you are submitting to favors bold math in the abstract,
% then you can use LaTeX's standard command \boldmath at the very start
% of the abstract to achieve this. Many IEEE journals/conferences frown on
% math in the abstract anyway.

% no keywords

\section{Introduction}

Traditional energy-constrained wireless communication systems, e.g., wireless sensor networks, are equipped with fixed energy supply devices such as batteries, which have limited operation time. When a sufficient number of sensors run out of battery, the whole network may fail. Therefore, for applications in which replacing batteries is inconvenient or even impossible, e.g., in hostile or toxic environment, energy harvesting (EH) becomes an alternative solution to provide almost unlimited, easy, and safe energy supplies for wireless networks. However, compared with conventional time-invariant energy sources, energy replenished by harvesters is usually intermittent over time, e.g., with energy fluctuation caused by time-dependent solar and wind patterns. As a result, wireless devices powered by renewable energy sources are subject to explicit EH constraints over time, i.e., the total energy consumed up to a given time must be less than the energy harvested by that time.

Wireless communication with EH nodes has recently drawn significant research interests (see, e.g., \cite{sharma,yang,ozel,rui,chuan,ulukus,rajesh}). In \cite{ulukus,rui}, the authors investigated the throughput maximization problem over a finite number of transmission blocks for both the cases with a deterministic EH model (for which the energy amount and arrival time are assumed to be known \emph{a priori} at the transmitter) and a random EH model (for which only the statistics of the EH process is known). In \cite{chuan}, the authors studied the throughput maximization problem for the three-node relay channel with the deterministic EH model and the decode-and-forward (DF) relaying scheme. From an information-theoretic viewpoint, the authors in \cite{ozel,rajesh} derived the capacities for the additive white Gaussian noise (AWGN) and fading channels with random EH sources, respectively, where it was shown that assuming an infinite energy storage volume, the same channel capacity can be achieved as the cases with conventional constant power supplies.

The aforementioned works about the EH systems in fading channels investigate the maximum achievable average throughput over a long term. It is worth noting that for conventional cases with constant power supplies, the information-theoretic limits of fading channels have been thoroughly studied in the literature (e.g., see \cite{biglieri} and references therein) for both the cases without and with a transmission delay constraint, termed as the \emph{ergodic capacity} and the \emph{outage capacity}, respectively. Specifically, with the perfect channel state information (CSI) known at both the transmitter and the receiver, the ergodic capacity is achieved by a random Gaussian codebook with infinite-length codewords, which span over all different fading states \cite{goldsmith}. To maximize the ergodic capacity, it was shown in \cite{goldsmith} that the so-called ``water-filling'' power allocation is optimal. Under the same CSI assumption, the outage probability minimization problem for block-fading channels with delay-constrained transmission has been studied in \cite{caire}, where the optimal power allocation is shown to have a ``truncated channel inversion'' structure. With perfect CSI known at the receiver but no CSI is known at the transmitter, the ergodic capacity of block-fading channels is achieved by a constant transmit power allocation over all the fading states \cite{caire,telater}. Under the above CSI assumption, the outage capacity is defined as the maximum transmission rate subject to a given outage probability constraint \cite{shamai,tse}, where the transmit power is assumed to be constant over all transmission blocks.

In this paper, we consider the transmissions of delay-constrained traffic at a constant rate over block-fading channels, where the CSI is perfectly known at the receiver but only the channel distribution information (CDI) is known at the transmitter. Under such assumptions, we investigate the average (over time) outage probability minimization problem over a finite horizon of $N$ EH periods, each of which contains $M$ communication blocks, with the EH rate during each EH period being constant, denoted by $Q_i$, $1 \leq i \leq N$. In particular, we consider the following two types of EH models:
\begin{enumerate}
  \item Non-causal energy state information (ESI): All $Q_i$'s are assumed to be known at the transmitter prior to the transmissions. In this case, we are interested in developing the optimal \emph{offline} power allocation to minimize the average outage probability.
  \item Causal ESI: At the $i$-th EH period, $i=1,2,\cdots,N$, only $Q_1,\cdots,Q_i$ are known at the transmitter. For this case, we are interested in developing various \emph{online} power allocation algorithms based on the statistical properties of the EH rates $Q_i$'s.
\end{enumerate}
Note that if $N=1$, both the cases with non-causal and causal ESI become identical. As such, with $N=1$, we first examine the properties of the outage probability as a function of the transmit power for a large class of practical fading channel distributions, and show that the outage probability usually behaves as a ``concave-convex'' function. Due to this fact, it turns out that the outage probability minimization problem with power control subject to the EH constraints is in general non-convex. For the special case of $N=1$, we show that this problem degrades to the classic outage minimization problem with an average power constraint over a finite horizon of $M$ communication blocks. Although this simplified problem is still non-convex, we obtain its globally optimal power allocation with only a one-dimensional search by exploiting the properties of the outage probability function. The optimal solution is shown to have a threshold-based structure: When the average power is larger than a threshold determined by the transmission rate and the CDI of the fading channel, uniform power allocation is optimal; otherwise, the optimal power allocation corresponds to an ``on-off'' transmission. Interestingly, this result revisits the classic definition of the outage capacity for fading channels \cite{shamai} with uniform power allocation only, by revealing that a non-uniform (i.e., on-off) power allocation can yield a smaller outage probability in the low-power regime (or equivalently in the high-outage regime). Furthermore, we propose a suboptimal power allocation scheme based on the optimal threshold and on-off structures, which avoids the exhaustive search operation in the optimal solution and is shown to be asymptotically optimal as $M$ goes to infinity.

For the more general case of $N>1$, we first look at the model with non-causal ESI, where we develop a forward search algorithm to compute the globally optimal offline power allocation for outage minimization, with at most $N$ one-dimensional searches based on the results obtained for the case of $N=1$. It is shown that the obtained optimal power profile is non-decreasing over time and has an interesting ``save-then-transmit'' structure similarly to that reported in \cite{luo,xu} under different problem setups. Furthermore, a suboptimal low-complexity power allocation algorithm is proposed to avoid the exhaustive search operations in the optimal algorithm, similar to the case of $N=1$.

For the causal ESI model with $N>1$, we study the optimal online power allocation problem to minimize the average outage probability over a finite horizon, which is formulated as a Markov decision process (MDP) problem, and obtain the solution by applying dynamic programming and exploring the results obtained for the case of $N=1$. We also propose a suboptimal power allocation scheme, which achieves a more flexible performance-versus-complexity tradeoff than the MDP-based solution.

The remainder of this paper is organized as follows. Section II presents the system model and formulates the outage probability minimization problem. Section III presents some important properties of the outage probability function and applies them to solve the outage probability minimization problem for the case with $N=1$. Section IV extends the results to the more general case of $N>1$ with non-causal ESI. Section V addresses the power allocation problem for the causal ESI case with $N>1$. Numerical results are presented in Section VI to validate the proposed algorithms. Finally, Section VII concludes the paper.

\section{System Model and Problem Formulation}

\subsection{System Model}

In practical wireless systems, the duration of a communication block is usually on the order of millisecond, while the EH process evolves at a much slower speed, e.g., solar and wind power typically remains constant over windows of seconds. In other words, the coherence time of the EH process is usually much larger than the channel coherence time. As such, it is reasonable to assume that the EH process keeps constant over a sufficient number of communication blocks, during which the channel may change from block to block. In this paper, we consider the transmission over $N$ EH periods, each of which contains $M$ communication blocks, with a unit block length. During each EH period, the EH rate level is assumed to be constant over the constituting $M$ communication blocks, denoted as $Q_i$, $i=1,\cdots,N$. For the time being, we assume that $N$ and $M$ are finite integers, and will address the case of $M \rightarrow \infty$\footnote{This scenario can be regarded as an approximation for the case with very large $M$, e.g., the wind usually changes in several seconds, during which thousands of communication blocks have been sent.} in Sections III and IV. We also assume that the battery capacity to store the harvested energy is infinite, and the consumed energy at the source other than the transmission energy is relatively small and thus negligible.

We consider a block-fading channel, and the input-output relationship of the $(i,j)$-th\footnote{We denote the index number of the $j$-th communication block in the $i$-th EH period as $(i,j)$.} communication block is given by
\begin{align}
y_{i,j}  = h_{i,j} \sqrt{P_{i,j}} x_{i,j}  + n_{i,j} ,~i=1,\cdots,N,~j=1,\cdots,M,
\end{align}
where $y_{i,j}$ is the channel output, $x_{i,j}$ is the corresponding channel input with zero mean and unit average power, $h_{i,j}$ is the complex fading channel gain, $P_{i,j}$ is the transmit power, and $n_{i,j}$ is the independent and identically distributed (i.i.d.) circularly symmetric complex Gaussian (CSCG) noise with zero mean and unit variance. In particular, the complex channel gain $h_{i,j}$ is assumed to be an i.i.d. random variable across the communication blocks, with zero mean and unit variance. Here, $h_{i,j}$ is unknown to the transmitter, but perfectly known at the receiver.

For each communication block, the corresponding instantaneous mutual information of the channel assuming the optimal Gaussian codebook for the transmitted signals is given by
\begin{align}
I_{i,j}(h_{i,j} , P_{i,j}) = \log \left( 1 + |h_{i,j}|^2 P_{i,j} \right),
\end{align}
where the logarithm is assumed to have base 2. In this paper, we assume that all $NM$ communication blocks are transmitting with the same constant rate $R$; then, the outage probability at the $(i,j)$-th communication block is defined as
\begin{align}
\mathcal{F}(P_{i,j}) = \Pr \left\{ I_{i,j}(h_{i,j} , P_{i,j}) < R \right\} = \Pr \left\{ |h_{i,j}|^2 < \frac{ 2^R -1 }{P_{i,j}} \right\}.
\end{align}
Note that $\mathcal{F}(P_{i,j})$ is the outage probability function in terms of the transmit power $P_{i,j}$, the probability distribution of the fading channel gain $h_{i,j}$, and the transmission rate $R$. Moreover, we assume that $\mathcal{F}(\cdot)$ is strictly decreasing over its domain of $P_{i,j} \geq 0$.

\subsection{Problem Formulation}

In this subsection, we formulate the average outage probability minimization problem over finite horizon for both the cases with non-causal and causal ESI.

\subsubsection{Non-causal ESI}

For this case, we assume that the EH rate levels of all the $N$ EH periods are known prior to the transmissions. Hence, the transmit power available for each communication block is limited by the following EH constraints:
\begin{align} \label{source_energy-const}
\sum_{i=1}^n \sum_{j=1}^m P_{i,j}  \leq M \sum_{i=1}^{n-1} Q_i + m Q_n,~n=1,\cdots,N,~m=1,\cdots,M,
\end{align}
where the effect of block length (connecting power to energy) is normalized out by the unit-length assumption. Then, the average outage probability minimization problem over the finite horizon of $NM$ communication blocks is formulated as
\begin{align}
\text{(P1)}&~~\min_{\left\{P_{i,j} \right\}}  ~ \frac{1}{N M} \sum_{i=1}^N \sum_{j=1}^M \mathcal{F} (P_{i,j}) \label{FD_model1} \\
\text{s.t.}  &~~ (\ref{source_energy-const}),~P_{i,j} \geq 0,~1 \leq i \leq N,~1 \leq j \leq M. \label{FD_model2}
\end{align}
Since the constraints in (\ref{FD_model2}) are all linear, the convexity of Problem (P1) is determined by that of the function $\mathcal{F}(\cdot)$, which will be discussed later in Section III. Here, we show some properties of the optimal solution with an arbitrary $\mathcal{F}(\cdot)$, which are summarized as the following two remarks.

\begin{Remark}
First, the optimal solution of Problem (P1) may not be unique. This is due to the fact that if the transmit power values over two arbitrary blocks are decreasing over time, we can switch them without violating the EH constraints and still achieve the same objective value. Thus, without loss of generality, in this paper we are only interested in obtaining the optimal power allocation of Problem (P1) that is non-decreasing over time, similarly to \cite{yang,chuan}.
\end{Remark}

\begin{Remark}
Second, it is worth noting that the last constraint in (\ref{source_energy-const}) is always satisfied with equality by the optimal solution of Problem (P1). This is due to the fact that any residue power after the last communication block should be used in the last block to yield a lower average outage probability, since the outage probability function $\mathcal{F}(\cdot)$ is assumed to be strictly decreasing.
\end{Remark}

\subsubsection{Causal ESI}

For this case, we assume that only the past and current EH rate levels are known at the transmitter. Similar to \cite{rui}, we assume that the EH process $\{Q_i\}$ is a first-order stationary Markov process over $i$, and its distribution $P(Q_{i+1} | Q_i)$ is known at the transmitter. Then, the battery state $B_{i,j} \geq 0 $ at the beginning of the $(i,j)$-th communication block is given by
\begin{align} \label{battery_dynamic}
B_{i,j+1} = B_{i,j} + Q_i - P_{i,j},
\end{align}
where $B_{i,j+1}$ denotes the battery energy level at the beginning of the $((i,j)+1)$-th\footnote{$(i,j)+1$ is defined as the index of the next communication block after the $(i,j)$-th one.} communication block. Thus, it is easy to check that $\{ B_{i,j} \}$ also follows a first-order Markov model. For convenience, we assume zero initial energy storage, i.e., $B_{1,1}=0$.

Therefore, for the $n$-th EH period, the minimum average outage probability with given $Q_n$ is obtained by solving the following problem
\begin{align} \label{problem_opt_online}
(\text{P2}.n)~~&\mathcal{T}_n ^*=  \min_{ \left\{ P_{n,1},\cdots,P_{n,M} \right\} } \sum_{j=1}^M \mathcal{F} (P_{n,j}) + \mathbb{E} \left[ \sum_{i=n+1}^N   \left. \sum_{j=1}^M \mathcal{F} (P_{i,j}) \right| Q_n \right] \\
\text{s. t.}~~& 0 \leq P_{i,j} \leq B_{i,j} + Q_i,~n \leq i \leq N,~1 \leq j \leq M.
\end{align}
It is worth noting that only $Q_n$ is known to the transmitter, and $Q_{n+1} ,\cdots,Q_{N}$ are random variables. The expectation in Problem (P2.$n$) is taken over $\left\{  Q_{n+1},\cdots,Q_N \right\}$; and $P_{i,j}$, $n+1 \leq i \leq N$ and $1 \leq j \leq M$, is a function of $\left\{  Q_{n+1},\cdots,Q_i \right\}$ and $\left\{ P_{n,1},\cdots,P_{n,M} \right\}$. For convenience, we denote the group of Problems (P2.$n$), $n=1,\cdots,N$, as Problem (P2). It is easy to verify that Problem (P2) is a MDP problem, for which the optimal solution will be studied in Section V.

\section{Optimal Power Allocation for the Case of $N=1$}

In this section, we first present some important properties for the outage probability function $\mathcal{F}(\cdot)$, and then apply them to derive an optimal power allocation for Problems (P1) and (P2) for the special case of $N=1$, as well as a suboptimal power allocation scheme with lower complexity. Since $N=1$, the obtained power allocation algorithms apply to both the cases with non-causal and causal ESI.

\subsection{Properties of Outage Probability Function}

In this subsection, we first show some interesting properties of the outage probability function, based on which we define two types of outage probability functions for different fading channel distributions. Without loss of generality, we adopt Weibull fading \cite{sagias} as an example, from which we later draw a general result for other fading distributions. For convenience, we omit the subscript $(i,j)$ for $h_{i,j}$ and $P_{i,j}$ in this subsection.

With Weibull fading, the complex channel coefficient $h$ can be written as \cite{sagias}
\begin{align}
h = \left( X+ jY  \right)^{2 / \beta}, \label{distribution_weill_new}
\end{align}
where $X$ and $Y$ are i.i.d. Gaussian random variables with zero mean and identical variance satisfying $\mathbb{E} \left( |h|^2 \right) = 1$, and $\beta$ is a parameter controlling the severity or the diversity of the channel fading \cite{tse}, with $\beta >0$. It is observed that when $\beta=2$, Weibull fading degrades to the well-known Rayleigh fading. The probability density function (PDF) of $|h|$ is given as $f(r) = \beta r ^{\beta -1} \exp \left( - r^{\beta} \right)$, with $r=|h| \geq 0$, and thus the corresponding outage probability function is given as \cite{sagias}
\begin{align} \label{outage_weibull}
\mathcal{F} ( P)= 1 - \exp \left(  - \left( \frac{2^R-1}{P} \right)^{\beta/2} \right).
\end{align}

In Fig. \ref{outage_curve}, we plot the outage probability function versus transmit power for Weibull fading. It is observed and also can be verified (see Preposition \ref{conv_concav_prop} below) that for all cases with different values of $\beta$, the outage probability function is non-convex. Recall that $\beta$ is an indicator of the ``diversity order'' \cite{tse} of the Weibull fading channel (larger values of $\beta$ imply higher diversity orders). Thus, the Weibull fading model is quite general for modelling practical fading channels with different degrees of diversity.

Next, we obtain the following result on the convexity of the outage probability function with Weibull fading.

\begin{figure}[!t]
\centering
\includegraphics[width=.6 \linewidth]{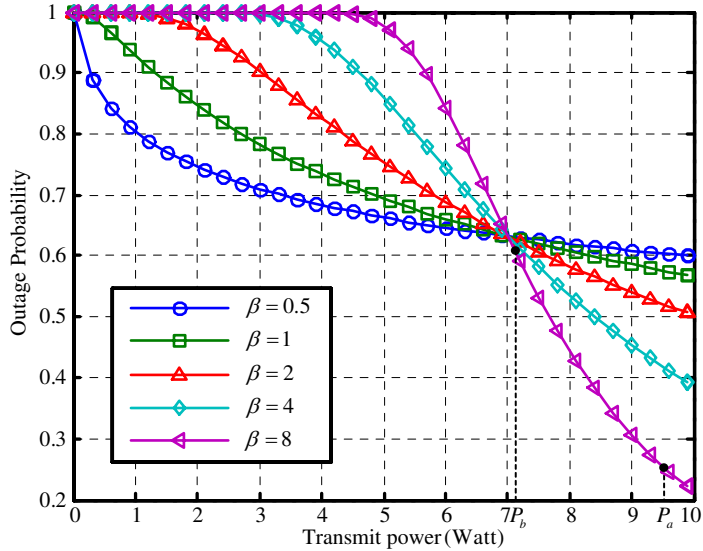}
\caption{Outage probability versus transmit power for Weibull fading with different fading parameters $\beta$, with $R=3$ bits/sec/Hz.} \label{outage_curve}
\end{figure}

\begin{Proposition} \label{conv_concav_prop}
The outage probability function given in (\ref{outage_weibull}) is concave over $P \in \left[0, P_b \right]$ and convex over $P \in \left[P_b , + \infty \right)$, where
\begin{align} \label{Pb_point}
P_b= \left( \frac{ \beta /2 }{ \beta / 2 +1} \right)^{\beta / 2} \left( 2^R-1 \right).
\end{align}
\end{Proposition}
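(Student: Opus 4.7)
The plan is to verify the concave/convex split directly by computing the second derivative of $\mathcal{F}(P)$ and showing it has a unique sign change, from negative to positive, at exactly $P=P_b$. For notational clarity I would introduce $c=2^R-1$ and $\alpha=\beta/2$, so that $\mathcal{F}(P)=1-\exp(-g(P))$ with $g(P)=c^{\alpha}P^{-\alpha}$. Since $\mathcal{F}''$ is continuous and the sign of the second derivative determines local concavity/convexity, locating its unique root (and checking the sign on either side) immediately gives the desired claim.

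The key computation proceeds in two steps. First, a direct differentiation yields $g'(P)=-\alpha c^{\alpha} P^{-\alpha-1}$ and hence $\mathcal{F}'(P)=-\alpha c^{\alpha}P^{-\alpha-1}e^{-g(P)}$, which is manifestly negative and recovers the assumed monotonicity. Differentiating a second time by the product rule, I would factor the result into the form
\begin{equation*}
\mathcal{F}''(P) \;=\; \alpha c^{\alpha}\, P^{-\alpha-2}\, e^{-g(P)}\,\Bigl[(\alpha+1)-\alpha\, g(P)\Bigr].
\end{equation*}
The prefactor $\alpha c^{\alpha} P^{-\alpha-2} e^{-g(P)}$ is strictly positive for all $P>0$, so the sign of $\mathcal{F}''(P)$ is governed entirely by the bracketed term. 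Since $g(P)$ is a strictly decreasing function of $P$ on $(0,\infty)$ ranging from $+\infty$ down to $0$, the bracket $(\alpha+1)-\alpha g(P)$ is a strictly increasing function of $P$ that is negative for small $P$ and positive for large $P$, with a unique zero determined by $g(P)=(\alpha+1)/\alpha$.

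Solving $c^{\alpha}P^{-\alpha}=(\alpha+1)/\alpha$ for $P$ then identifies the unique inflection point, which after substituting back $\alpha=\beta/2$ and $c=2^R-1$ yields the $P_b$ displayed in (\ref{Pb_point}); $\mathcal{F}''(P)<0$ for $P\in(0,P_b)$ establishes concavity on $[0,P_b]$ (continuity at $P=0$ takes care of the endpoint), while $\mathcal{F}''(P)>0$ for $P\in(P_b,\infty)$ gives convexity on $[P_b,\infty)$. The only real ``work'' in this proof is the bookkeeping in the second derivative and cleanly isolating the bracketed factor; there is no genuine analytic obstacle, since once the second derivative is factored into a positive term times a monotone function of $P$, the sign analysis and location of the inflection point are immediate.
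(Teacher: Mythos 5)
Your proof takes essentially the same route as the paper's: compute $\mathcal{F}''$, factor out the strictly positive prefactor $\alpha c^{\alpha}P^{-\alpha-2}e^{-g(P)}$, and read the sign change off the bracket $(\alpha+1)-\alpha g(P)$, which is monotone in $P$ because $g$ is strictly decreasing; the paper's displayed second derivative is the identical expression up to pulling a minus sign into the bracket. The one step you should not wave through is the final substitution: solving $c^{\alpha}P^{-\alpha}=(\alpha+1)/\alpha$ gives $P=\bigl(\tfrac{\alpha}{\alpha+1}\bigr)^{1/\alpha}(2^R-1)$, i.e.\ exponent $2/\beta$ after substituting $\alpha=\beta/2$, whereas (\ref{Pb_point}) displays exponent $\beta/2$; the two coincide only for $\beta=2$ (Rayleigh). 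So your claim that the root ``yields the $P_b$ displayed in (\ref{Pb_point})'' does not hold as written: either (\ref{Pb_point}) carries an inverted exponent or your last line needs rechecking, though the sign analysis and the concave-then-convex conclusion are unaffected either way. (As a side note, the concluding sentence of the paper's own proof swaps ``concave'' and ``convex'' relative to the inequalities it just derived, so the proposition statement, not that sentence, is what your argument should be matched against.)
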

\begin{proof}
By obtaining the second-order derivative of $\mathcal{F} (P)$ in (\ref{outage_weibull}), it follows that
\begin{align*}
\mathcal{F}''(P) &= -\frac{\beta}{2} \left( 2^R-1 \right)^{\beta/2} P^{-\beta/2 -2} \exp\left(- \left( \frac{2^R -1}{P} \right)^{\beta / 2}\right) \left( \frac{\beta}{2} \left( 2^R-1 \right)^{\beta/2} P^{- \beta/2}  - (\frac{\beta}{2} + 1) \right).
\end{align*}
We thus have $\mathcal{F}''\left(P \right) \leq 0$ over $0< P \leq P_b$ and $\mathcal{F}''(P) > 0$ over $P > P_b$, which means that $\mathcal{F}(P)$ is concave when $P \geq P_b$ and convex when $0< P \leq P_b$. Thus, the proposition is proved.
\end{proof}

Besides $P_b$, we now give another important parameter for the outage probability function as follows.

\begin{Proposition} \label{time_sharing_point}
There exists a value $P_a$, with $P_a > P_b$, such that all the points $\left( P , \mathcal{F} \left( P \right) \right)$ on the outage probability curve given in (\ref{outage_weibull}) are above the straight line passing through the two points $(0,1)$ and $\left( P_a, \mathcal{F} \left( P_a \right) \right)$, i.e.,
\begin{align} \label{time_sharing_point_formula}
\mathcal{F} (P) \geq \frac{\mathcal{F}\left( P_a \right) - 1}{ P_a} P + 1, ~\forall P \geq 0.
\end{align}
\end{Proposition}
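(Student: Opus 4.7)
The plan is to interpret (\ref{time_sharing_point_formula}) through the secant-slope function
\begin{align*}
\psi(P) := \frac{\mathcal{F}(P) - 1}{P}, \quad P > 0,
\end{align*}
which is the slope of the chord joining $(0,1)$ to $(P,\mathcal{F}(P))$. Since $\lim_{P \to 0^+}\mathcal{F}(P) = 1$, the desired inequality is equivalent to $\psi(P) \geq \psi(P_a)$ for all $P > 0$ (with the $P = 0$ case becoming equality). So it suffices to exhibit a global minimizer $P_a$ of $\psi$ on $(0,\infty)$ that lies strictly to the right of $P_b$.

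For existence, I would examine the endpoint behavior of $\psi$. The strict monotonicity of $\mathcal{F}$ gives $\psi < 0$ throughout $(0,\infty)$, while (\ref{outage_weibull}) yields $\psi(P) = -\exp(-((2^R-1)/P)^{\beta/2})/P$, which tends to $0^-$ as $P \to 0^+$ (the Weibull exponential decays super-polynomially and overwhelms the $1/P$ blow-up) and as $P \to \infty$ (both $-1/P$ and $\mathcal{F}(P)/P$ vanish). Continuity of the strictly negative $\psi$ then forces attainment of a global minimum at some interior $P_a \in (0,\infty)$.

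To place $P_a > P_b$, I would combine Proposition \ref{conv_concav_prop} with the classical chord-slope characterization of concavity: because $\mathcal{F}$ is strictly concave on $[0,P_b]$ with $\mathcal{F}(0^+) = 1$, the chord slopes $\psi(P)$ are strictly decreasing on $(0,P_b]$, which rules out any minimizer in the open subinterval $(0,P_b)$. To eliminate $P_a = P_b$, a short computation of $\psi'(P_b) = [P_b \mathcal{F}'(P_b) - (\mathcal{F}(P_b) - 1)]/P_b^2$ using the identities $P \mathcal{F}'(P) = -(\beta/2) u \exp(-u)$ and $\mathcal{F}(P) - 1 = -\exp(-u)$ with $u = ((2^R-1)/P)^{\beta/2}$, together with the fact that $u(P_b) = 1 + 2/\beta$ at the inflection point, gives $\psi'(P_b) = -(\beta/2)\exp(-u(P_b))/P_b^2 < 0$. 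So $\psi$ continues to decrease strictly through $P_b$, and the minimizer must satisfy $P_a > P_b$. Rearranging $\psi(P) \geq \psi(P_a)$ then yields exactly (\ref{time_sharing_point_formula}). The main delicacy is the endpoint analysis at $P \to 0^+$: the vanishing of $\psi$ there relies on the super-polynomial decay built into the Weibull expression, which is precisely what confines the global minimizer to the interior and makes the whole argument close cleanly.
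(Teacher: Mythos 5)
Your proof is correct and follows essentially the same route as the paper's Appendix A: both work with the secant-slope function $\mathcal{S}(P)=(\mathcal{F}(P)-1)/P$, show it is continuous, strictly negative on $(0,\infty)$, and tends to $0$ both as $P\to 0^{+}$ (by the super-polynomial decay of $\exp(-((2^R-1)/P)^{\beta/2})$) and as $P\to\infty$, and conclude that its global minimum is attained at an interior point $P_a$, which is exactly the point making (\ref{time_sharing_point_formula}) hold. The one place you go beyond the paper is the verification that $P_a>P_b$ --- a claim the proposition asserts but which Appendix A never addresses --- and your argument there is sound: concavity of $\mathcal{F}$ on $[0,P_b]$ with $\mathcal{F}(0^{+})=1$ makes the chord slopes strictly decreasing on $(0,P_b]$, and the identity $u(P_b)=1+2/\beta$ (read off from $\mathcal{F}''(P_b)=0$) gives $P_b^2\,\mathcal{S}'(P_b)=e^{-u}\bigl(1-\tfrac{\beta}{2}u\bigr)=-\tfrac{\beta}{2}e^{-u}<0$, so the minimizer must lie strictly to the right of $P_b$.
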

\begin{proof}
See Appendix \ref{proof_PA}.
\end{proof}

\begin{Remark}
Beyond the existence of point $P_a$, we can further show that the point satisfying this property is unique. This claim can be validated by contradiction. Suppose that this property holds at two different points $P_a^1$ and $P_a^2$. It is then easy to check that for an arbitrary $P_0 \in [P_a^1,P_a^2]$, this property is also satisfied, which means that the function $\mathcal{F}(P)$ over the range $[P_a^1,P_a^2]$ is actually a line segment connecting the two points $(P_a^1,\mathcal{F}(P_a^1))$ and $(P_a^2,\mathcal{F}(P_a^2))$. It is thus true that the first-order derivative of function $\mathcal{F}(P)$ is a constant over $[P_a^1,P_a^2]$, while its second-order derivative is zero over this regime. However, the above cannot be true in general for a practical continuous fading distribution (see, e.g., (\ref{distribution_weill_new})). Thus, the presumption cannot be true and the uniqueness of $P_a$ is verified.
\end{Remark}

In general, it is difficult to obtain the closed-form expression for $P_a$. However, we can obtain the value of $P_a$ by a simple bisection search summarized as Algorithm \ref{search_time_sharing} in Table \ref{search_time_sharing}, for which the main ideas are given as follows: For a given point $\widetilde{P}_a$, denote the function of the straight line connecting points $(0,1)$ and $\left( \widetilde{P}_a, \mathcal{F} \left( \widetilde{P}_a \right)  \right)$ as $g(\widetilde{P}_a;x)$, which is defined as
\begin{align}
g(\widetilde{P}_a;x) = \frac{\mathcal{F}( \widetilde{P}_a)  - 1}{\widetilde{P}_a} x +1.
\end{align}
If $\widetilde{P}_a < P_a$, it is observed that there exists a $\delta>0$ such that $g(\widetilde{P}_a;x) < \mathcal{F}(x)$ for $x \in ( \widetilde{P}_a -\delta, \widetilde{P}_a )$ and $g( \widetilde{P}_a ;x) > \mathcal{F}(x)$ for $x \in ( \widetilde{P}_a , \widetilde{P}_a + \delta)$; otherwise if $\widetilde{P}_a > P_a$, it is observed that there exists an $\delta>0$ such that $g(\widetilde{P}_a ;x) > \mathcal{F}(x)$ for $x \in (\widetilde{P}_a -\delta, \widetilde{P}_a )$ and $g(\widetilde{P}_a ;x) < \mathcal{F}(x)$ for $x \in (\widetilde{P}_a, \widetilde{P}_a + \delta)$. According to the above property, we design the following algorithm to search for $P_a$, which returns an $ x \in \left( P_a-\epsilon, P_a+\epsilon \right) $ with a prescribed tolerable error $0< \epsilon \ll 1$.

\begin{table}[ht]
\begin{center}
\caption{Algorithm \ref{search_time_sharing}: Bisection search for $P_a$ defined in Proposition \ref{time_sharing_point}.}
\hrule
\vspace{0.3cm}
\begin{itemize}
\item Initialize $a_{\text{low}} =0$, $a_{\text{up}}$ (any sufficiently large positive number), and the error tolerance $\epsilon$;
\item While $a_{\text{up}}-a_{\text{low}} > \epsilon$, repeat the following steps
\begin{enumerate}
\item Let $P_a = \frac{a_{\text{up}} + a_{\text{low}}}{2}$
\item If $g(P_a;P_a-\epsilon) < \mathcal{F} (P_a-\epsilon)$ and $g(P_a;P_a+\epsilon) > \mathcal{F}(P_a+\epsilon)$, let $a_{\text{low}} = P_a$;
\item Else let $a_{\text{up}} = P_a$.
\end{enumerate}
\item Algorithm ends.
\end{itemize}
\vspace{0.2cm} \hrule \label{search_time_sharing} \end{center}
\end{table}

In Fig. \ref{outage_curve}, for the case of $\beta = 8$, we show the two points $P_a$ and $P_b$. For other types of fading channel distributions, e.g., Rician fading and Nakagami fading, it can be shown that their outage probability functions have similar properties to the above for Weibull fading; however, unlike Weibull fading, $P_b$ may no longer be expressible in a closed-form as that in (8), and can only be computed numerically. Moreover, for certain fading distributions, it is possible that the outage probability function is convex over the entire range of $P \geq 0$, which can be regarded as a special case with $P_a=P_b=0$. With the above observations, we categorize the outage probability functions for fading channel distributions in the two following types.

\begin{Definition}
An outage probability function $\mathcal{F}(x)$, $x \geq 0$, is said to be \emph{Type $\bm{A}$}, if $\mathcal{F}(x)$ is convex over $x \geq 0$.
\end{Definition}

\begin{Definition}
An outage probability function $\mathcal{F}(x)$, $x \geq 0$, is said to be \emph{Type $\bm{B}$}, if there exist two unique points $0 < P_b \leq P_a$, such that:
\begin{enumerate}

\item $\mathcal{F}(x)$ is concave over $[0,P_b]$ and convex over $(P_b, \infty)$;

\item $\mathcal{F}(x)$ is always above the line $g(x)$, passing through the two points $(0,1)$ and $(P_a, \mathcal{F}(P_a))$, i.e., $\mathcal{F}(x) \geq g(x),~x\geq 0$.

\end{enumerate}
\end{Definition}

Here, a Type B function is with a ``concave-convex'' shape, while a Type A function is a special case of Type B functions with $P_a=P_b=0$. It can be verified that Weibull fading, Rician fading, Nakagami fading, and double Rayleigh fading in general lead to Type B outage probability functions. However, we need to point out that an arbitrary outage probability function can be neither Type A nor Type B, which is out of the scope of this paper.

\begin{Remark}
It is worth noting that if $\mathcal{F}(\cdot)$ is with Type A, Problem (P1) is convex. Thus, existing power allocation algorithms in \cite{yang,rui,chuan} can be applied to solve this problem. As such, in the rest of this paper we are mainly interested in the case of Type B outage probability functions.
\end{Remark}

\subsection{Optimal Power Allocation with $N=1$}

In this subsection, we solve Problems (P1) and (P2) for the special case of $N=1$, based on the previously derived outage probability function properties. Here, since $N=1$, we use the notation $P_j$ instead of $P_{1,j}$ for convenience.

Instead of solving Problem (P1) with $N=1$ directly, we consider the following problem by removing the first $M-1$ constraints in (\ref{source_energy-const}), i.e.,
\begin{align}
\text{(P3)}&~~\min_{\left\{P_{j} \geq 0 \right\}}  ~\frac{1}{M} \sum_{j=1}^M \mathcal{F} \left(  P_{j} \right) \label{P2_FD_model1} \\
\text{s.t.}  &~~  \sum_{j=1}^M P_j \leq M Q_1 . \label{P2_FD_model2}
\end{align}
Obviously, Problem (P3) is a relaxed version of Problem (P1) with $N=1$, which provides a lower bound on the optimal value of Problem (P1). We will show next that since there exists a non-decreasing optimal solution of Problem (P3) that is guaranteed to satisfy the omitted $M-1$ constraints in (\ref{source_energy-const}), the solution for Problem (P3) is also optimal for Problem (P1) with $N=1$.

It is easy to see that Problem (P3) is also a power allocation problem to minimize the average outage probability over a $M$-block block-fading channel with an average power constraint. If $\mathcal{F}(P_j)$ is a non-convex function (e.g., Type B function), Problem (P3) is non-convex \cite{boyd}, and thus difficult to be solved by conventional convex optimization techniques. To provide a solution, we first present some structural properties of the optimal power profile of Problem (P3) as follows.

\begin{Proposition} \label{prop_one_low_n1}
For the optimal solution of Problem (P3), there is at most one strictly positive power value $P_{j}^*$ that is below $P_b$ defined in Proposition \ref{conv_concav_prop}.
\end{Proposition}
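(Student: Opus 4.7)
The plan is to argue by contradiction. Suppose an optimal $\{P_j^*\}$ of Problem (P3) contains two indices $i \neq k$ with $P_i^*, P_k^* \in (0, P_b)$. Because the feasible set of (P3) only constrains the total power $\sum_j P_j$, any rearrangement that preserves $P_i^* + P_k^*$ and leaves the other $M-2$ blocks untouched remains feasible, so it suffices to show that such a two-block rearrangement can strictly decrease $\mathcal{F}(P_i^*) + \mathcal{F}(P_k^*)$.

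Let $S = P_i^* + P_k^*$ and define $g(t) = \mathcal{F}(t) + \mathcal{F}(S - t)$ on the set of $t$ for which both arguments lie in the concave region $[0, P_b]$. On this set $g$ is the sum of a concave function of $t$ and the composition of a concave function with an affine map, hence concave; strict concavity of $\mathcal{F}$ on $(0, P_b)$, which follows from the explicit expression for $\mathcal{F}''$ derived in the proof of Proposition \ref{conv_concav_prop} (and which carries over to the other Type B distributions mentioned), upgrades this to strict concavity of $g$. The minimum of $g$ on any such interval is therefore attained strictly at the endpoints.

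I then split into two cases according to $S$. If $S \leq P_b$, the admissible interval for $t$ is all of $[0, S]$, and moving from the interior point $t = P_i^*$ to the endpoint $t = 0$ replaces $(P_i^*, P_k^*)$ by $(0, S)$ and strictly decreases the objective, while reducing by at least one the number of positive powers below $P_b$. If instead $S > P_b$, the admissible interval shrinks to $[S - P_b, P_b]$; the hypothesis $P_i^*, P_k^* \in (0, P_b)$ is equivalent to $t = P_i^*$ lying in the open interior $(S - P_b, P_b)$, so strict concavity of $g$ on $[S - P_b, P_b]$ gives $g(P_i^*) > g(P_b) = g(S - P_b)$, and the replacement $(P_i^*, P_k^*) \mapsto (P_b,\, S - P_b)$ again strictly decreases the objective while leaving only the single positive value $S - P_b$ below $P_b$. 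In either case the original profile cannot have been optimal, a contradiction that proves the claim. The main subtlety is the second case, where one cannot slide $t$ all the way to $0$ without leaving the concave region; the fix is precisely the restriction of $t$ to $[S - P_b, P_b]$ so that concavity applies throughout the admissible redistribution. Note that the point $P_a$ of Proposition \ref{time_sharing_point} is not invoked here and will enter only later, when handling global rather than pairwise rearrangements.
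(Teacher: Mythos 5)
Your proof is correct and uses essentially the same idea as the paper: a contradiction argument that redistributes power between two blocks whose values both lie in the concave region $(0,P_b)$, exploiting concavity of $\mathcal{F}$ there. The paper does this via an infinitesimal $\epsilon$-perturbation (decrease the smaller, increase the larger), whereas you push the redistribution all the way to an endpoint of the admissible interval and invoke strict concavity to get a strict decrease; your version is slightly more careful about the case where the pair's sum exceeds $P_b$, but the underlying mechanism is identical.
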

\begin{proof}
This proposition is proved by contradiction. Suppose that for the optimal power profile $P_j^{\star}$'s of Problem (P3), there exist $j_1$ and $j_2$, $1 \leq j_1,j_2 \leq M$, such that $0 \leq P_{j_1}^{\star}  \leq P_{j_2}^{\star} < P_b$. Since $\mathcal{F}(x)$ is assumed to be concave over $[0,P_b]$, define $0 < \epsilon < \min ( P_{j_1}^{\star} , P_b -P_{j_2}^{\star}) $ and it is easy to check that $\mathcal{F}(P_{j_1}^{\star}) + \mathcal{F}(P_{j_2}^{\star}) \geq \mathcal{F}(P_{j_1}^{\star} - \epsilon) + \mathcal{F}(P_{j_2}^{\star} + \epsilon)$, which means that a new power allocation, with $\widehat{P}_{j_1} = P_{j_1}^{\star} - \epsilon$ and $\widehat{P}_{j_2} = P_{j_2}^{\star} + \epsilon$, yields a lower outage probability and this contradicts with the optimality of $P_j^{\star}$'s for Problem (P3). Therefore, this proposition is proved.
\end{proof}

\begin{Proposition} \label{prop_one_high_n1}
For Problem (P3), all optimal power values $P_{j}^*$'s, that are above $P_b$ defined in Proposition \ref{conv_concav_prop}, are identical.
\end{Proposition}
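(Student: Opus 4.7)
The plan is to mirror the contradiction argument used for Proposition \ref{prop_one_low_n1}, but now exploit the strict convexity of $\mathcal{F}(\cdot)$ on the upper regime $(P_b,\infty)$ rather than the concavity on $[0,P_b]$. Suppose for contradiction that the optimal solution $\{P_j^\star\}$ of Problem (P3) contains two indices $j_1 \neq j_2$ with $P_b < P_{j_1}^\star < P_{j_2}^\star$. I will then build a small equal-sum perturbation that strictly lowers the objective while preserving feasibility.

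Concretely, I would pick $0 < \epsilon < (P_{j_2}^\star - P_{j_1}^\star)/2$, small enough that both $P_{j_1}^\star + \epsilon$ and $P_{j_2}^\star - \epsilon$ still lie strictly above $P_b$, and define the perturbed profile $\widehat{P}_{j_1} = P_{j_1}^\star + \epsilon$, $\widehat{P}_{j_2} = P_{j_2}^\star - \epsilon$, with all other $\widehat{P}_j = P_j^\star$. Because the perturbation preserves $\sum_j P_j$, the single constraint (\ref{P2_FD_model2}) of Problem (P3) still holds, and non-negativity is trivially maintained. The objective change is
\begin{align*}
\Delta = \mathcal{F}(P_{j_1}^\star + \epsilon) + \mathcal{F}(P_{j_2}^\star - \epsilon) - \mathcal{F}(P_{j_1}^\star) - \mathcal{F}(P_{j_2}^\star).
\end{align*}
By Proposition \ref{conv_concav_prop} (and the Type B assumption), $\mathcal{F}''(P) > 0$ on $(P_b,\infty)$, so $\mathcal{F}'$ is strictly increasing there. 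Consequently $g(\epsilon) := \mathcal{F}(P_{j_1}^\star + \epsilon) + \mathcal{F}(P_{j_2}^\star - \epsilon)$ satisfies $g'(0) = \mathcal{F}'(P_{j_1}^\star) - \mathcal{F}'(P_{j_2}^\star) < 0$, so $\Delta < 0$ for sufficiently small $\epsilon > 0$. This contradicts the optimality of $\{P_j^\star\}$ and proves the proposition.

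The main obstacle I anticipate is making sure the argument is clean when one of the optimal powers sits exactly at the boundary $P_b$; this is handled by restricting the statement to powers \emph{strictly above} $P_b$ (the boundary case $P_j^\star = P_b$ is absorbed into the concave regime of Proposition \ref{prop_one_low_n1}) or, equivalently, by choosing $\epsilon$ small enough that both perturbed values remain in the open convex regime. A secondary point worth mentioning is that the relaxation from Problem (P1) with $N=1$ to Problem (P3) is essential here: the perturbation only needs to preserve the single cumulative constraint, so no bookkeeping of intermediate EH constraints is required. Combined with Proposition \ref{prop_one_low_n1}, this result will then pin down the optimal structure to an ``on--common-level'' allocation with at most one ``transitional'' power level below $P_b$, setting up the one-dimensional search in the sequel.
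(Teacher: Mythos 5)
Your proof is correct and follows essentially the same route as the paper, which simply notes that the argument mirrors the contradiction proof of Proposition \ref{prop_one_low_n1} with convexity of $\mathcal{F}$ on $[P_b,+\infty)$ replacing concavity on $[0,P_b]$ (i.e., an equal-sum perturbation pulling two distinct values together strictly decreases the objective). Your additional remarks on feasibility under the single relaxed constraint and on the boundary case $P_j^\star = P_b$ are consistent with the paper's treatment.
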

\begin{proof}
The proof is similar to that of Proposition \ref{prop_one_low_n1}, by considering the convexity of the function $\mathcal{F}(x)$ over the region $\left[ P_b, +\infty \right)$.
\end{proof}

\begin{Remark} \label{opt_structure_ave_power}
Based on Propositions \ref{prop_one_low_n1} and \ref{prop_one_high_n1}, it follows that the optimal solution of Problem (P3) should have the following structure: There are at most one block assigned with power $\widehat{P}_0$, $0 < \widehat{P}_0 < P_b$, $k^*$ blocks, $0 \leq k^* \leq M$, assigned with identical power $ \frac{M Q_1 - \widehat{P}_0}{k^*} \geq P_b$, and the rest blocks with zero power. Thus, solving Problem (P3) is equivalent to finding the values for $k^*$ and $\widehat{P}_0$.
\end{Remark}

Besides $P_b$, $P_a$ defined in Proposition \ref{time_sharing_point} also plays an important role in solving Problem (P3). As shown in Appendix \ref{opt_solution}, it is always desirable to allocate the available power to be close to $P_a$ when $Q_1 < P_a$. As such, the desired $k^*$ in Remark \ref{opt_structure_ave_power} should be either $k^* = \left\lfloor  \frac{Q_1}{ P_a} M  \right\rfloor $ or $k^* = \left\lfloor  \frac{Q_1}{ P_a} M  \right\rfloor +1 $, where $\lfloor x \rfloor$ denotes the flooring operation, and $\widehat{P}_0$ can be obtained by a one-dimensional search. Therefore, we obtain the following theorem on the optimal solution of Problem (P3).

\begin{Theorem} \label{opt_power_allocation}
The optimal solution of Problem (P3) is given as follows:
\begin{enumerate}
\item If $Q_1 \geq P_a$, $P_j^* = Q_1$, $1 \leq j \leq M$;
\item If $Q_1 < P_a$,
\begin{align} \label{opt_power_allocation_low_SNR}
P_j^* = \left\{
\begin{array}{ll}
  0, & 1 \leq j \leq M-k_0 -1 \\
  \widehat{P}_0 , & j = M-k_0 \\
  \frac{ M Q_1 - \widehat{P}_0 }{k_0}  , & M-k_0 +1  \leq j \leq M
\end{array}
\right. ,
\end{align}
where $k_0$ and $\widehat{P}_0$ are given as follows:
\begin{align}
k_0 & = \left\lfloor  \frac{Q_1}{ P_a} M  \right\rfloor, \label{search_k0} \\
\widehat{P}_0 & = \arg \min_{  P \in \mathcal{P} }  \mathcal{F} (P) +  k_0 \mathcal{F} \left(  \frac{ M Q_1 - P }{k_0} \right), \label{search_Power}
\end{align}
with $\mathcal{P} =   [0,P_b) \bigcup \frac{ M Q_1  }{k_0+1}$.
\end{enumerate}
\end{Theorem}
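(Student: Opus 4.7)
The plan is to combine the structural propositions proved above with the supporting-line bound of Proposition~\ref{time_sharing_point} to collapse the non-convex Problem (P3) into a search over at most two values of an integer $k$ and a single scalar $P_L$. By Propositions~\ref{prop_one_low_n1} and \ref{prop_one_high_n1} together with Remark~\ref{opt_structure_ave_power}, and the fact that the sum-power constraint must bind at optimality (since $\mathcal{F}$ is strictly decreasing), every optimal profile of (P3) has the form: $M-k-1$ idle blocks, at most one ``transition'' block at $P_L\in[0,P_b)$, and $k$ identical ``high-power'' blocks at $P_H:=(M Q_1-P_L)/k\geq P_b$. The objective then reduces to the scalar function $J(k,P_L):=(M-k-1)+\mathcal{F}(P_L)+k\,\mathcal{F}\bigl((M Q_1-P_L)/k\bigr)$, interpreting $P_L=0$ via $\mathcal{F}(0)=1$.

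For Case~1 ($Q_1\geq P_a$), I would first restrict to on-off allocations ($P_L=0$), under which the average outage can be written as $1 - Q_1\,\psi(M Q_1/k)$, where $\psi(P):=(1-\mathcal{F}(P))/P$ is the magnitude of the chord slope from $(0,1)$ to $(P,\mathcal{F}(P))$. Proposition~\ref{time_sharing_point} states that $\psi$ attains its unique maximum at $P_a$, and the convex branch of $\mathcal{F}$ on $[P_b,\infty)$ forces $\psi$ to be strictly decreasing on $[P_a,\infty)$. Because $Q_1\geq P_a$ entails $M Q_1/k\geq Q_1\geq P_a$ for every feasible $k\leq M$, the maximizer is $k=M$, i.e.\ $P_j^*=Q_1$. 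To rule out $P_L>0$, I would note that any transfer of power from a block at $P_H\geq P_a$ (where $\mathcal{F}$ is convex with small marginal benefit) into the concave region $[0,P_b)$ can only raise the objective, since the chord bound $\mathcal{F}(P_L)\geq 1+(\mathcal{F}(P_a)-1)P_L/P_a$ from Proposition~\ref{time_sharing_point} caps the outage reduction per unit of $P_L$ by the very slope $\psi(P_a)$ that the uniform allocation already realizes.

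For Case~2 ($Q_1<P_a$), the same unimodality of $\psi$ narrows the on-off candidates to $k\in\{k_0,k_0+1\}$ with $k_0=\lfloor M Q_1/P_a\rfloor$, since $M Q_1/k_0\geq P_a>M Q_1/(k_0+1)$ and $\psi$ is smaller at every other integer $k$. Allowing a genuine transition $P_L\in(0,P_b)$, I would argue that the only remaining improvement lies at $k=k_0$: for $k=k_0+1$ any positive $P_L$ pushes $P_H$ below the already sub-$P_a$ value $M Q_1/(k_0+1)$, and the monotonicity of $\psi$ on $(0,P_a]$ combined with the concavity of $\mathcal{F}$ on $[0,P_b]$ makes this strictly worse than taking $P_L=0$; for $k\notin\{k_0,k_0+1\}$ even the best transition cannot close the gap, because the concave-region gain is capped by $\psi(P_a)$ which the retained families already nearly match. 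Merging these two families yields exactly the feasible transition set $\mathcal{P}=[0,P_b)\cup\{M Q_1/(k_0+1)\}$ appearing in (\ref{search_Power}), so $\widehat{P}_0$ is obtained by the indicated one-dimensional minimization.

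The main obstacle will be the dominance argument that no mixed allocation at $k\notin\{k_0,k_0+1\}$ can beat the two retained families; the cleanest route is to translate the comparison into a contest between slopes of chords to $(0,1)$, all of which are bounded by $\psi(P_a)$ thanks to Proposition~\ref{time_sharing_point}. To close the proof, I would observe that the optimizer in (\ref{opt_power_allocation_low_SNR}) places zero blocks first, then the transition block, then the high-power blocks, so the sequence is non-decreasing in $j$ and automatically satisfies the intermediate constraints in (\ref{source_energy-const}) that were dropped when passing from (P1) to (P3); hence the (P3) solution is also optimal for (P1) with $N=1$.
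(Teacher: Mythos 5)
Your overall strategy is the paper's: reduce to the one-transition-block structure via Propositions \ref{prop_one_low_n1} and \ref{prop_one_high_n1} and Remark \ref{opt_structure_ave_power}, compare the on-off candidates through the slope of the chord from $(0,1)$ (your $\psi$ is exactly the geometric construction of Appendix B, where $p_k = 1-Q_1\psi(MQ_1/k)$), and finish with a one-dimensional search. The gap is that you try to make the single supporting-line inequality $\mathcal{F}(P)\geq 1-P\psi(P_a)$ do all of the dominance work, and it cannot. In Case 1 you claim the transition block's gain is capped by ``the very slope $\psi(P_a)$ that the uniform allocation already realizes,'' but uniform realizes $\psi(Q_1)\leq\psi(P_a)$, not $\psi(P_a)$. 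Applying the supporting-line bound to every active block of the mixed profile only yields a total outage at least $M - MQ_1\psi(P_a)$, and the uniform profile's value $M-MQ_1\psi(Q_1)$ sits \emph{above} that same lower bound, so no comparison follows. A marginal argument does not save this either: the loss from shaving the $k$ high blocks is governed by $-\mathcal{F}'(P_H)$, which tends to $0$ as $P_H$ grows and can be far smaller than $\psi(P_a)$. What is actually needed is the second geometric property the paper records as point 2) of Remark \ref{property_type_b_function}: for $X_1\leq X_0\leq X_2$ with $X_0\geq P_a$, the chord from $\left(X_1,\mathcal{F}(X_1)\right)$ to $\left(X_2,\mathcal{F}(X_2)\right)$ lies above $\left(X_0,\mathcal{F}(X_0)\right)$. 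Taking $X_1=P_L$, $X_2=P_H$ and $X_0=(P_L+kP_H)/(k+1)=MQ_1/(k+1)\geq P_a$ shows the mixed profile is dominated by $k+1$ equal blocks, i.e., by an on-off candidate already handled by your $\psi$-monotonicity step.

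The same missing tool is what you yourself flag as the ``main obstacle'': eliminating mixed allocations with $k\notin\{k_0,k_0+1\}$. Saying the concave-region gain ``is capped by $\psi(P_a)$ which the retained families already nearly match'' is not a proof; ``nearly'' concedes the point. The paper closes this with Lemma \ref{ave_power_lemma}: for each $k$, the inner minimization of $f_{k+1}(P)=\frac{1}{k+1}\left[\mathcal{F}(P)+k\,\mathcal{F}\left((MQ_1-P)/k\right)\right]$ is pinned to an endpoint ($P=0$, or full equalization over $k+1$ blocks) whenever $P_a\notin\left(MQ_1/(k+1),\,MQ_1/k\right)$ --- using point 1) of Remark \ref{property_type_b_function} for the first endpoint and point 2) for the second --- so every such $k$ collapses to a pure on-off candidate, and only $k=k_0$ survives as a genuine one-dimensional search over $[0,P_b)$, with $P=MQ_1/(k_0+1)$ appended to cover $p_{k_0+1}$. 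Your narrowing of the on-off candidates to $\{k_0,k_0+1\}$ and your closing observation that the ordered solution satisfies the dropped constraints of (\ref{source_energy-const}) are correct and match the paper; the argument becomes complete once the chord-above-the-curve step replaces the supporting-line bound in the two places above.
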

\begin{proof}
See Appendix \ref{opt_solution}.
\end{proof}

It is worth noting that only a one-dimensional search is needed to compute the optimal power allocation for Problem (P3) when $Q_1<P_a$. As explained in Appendix \ref{opt_solution}, since we cannot claim any monotonicity results on the right hand side of (\ref{search_Power}), an exhaustive search is necessary.

\begin{Remark}
It is worth pointing out that the outage capacity results in the literature for fading channels without CSI at the transmitter are usually based on  uniform power allocation over all the communication blocks. However, Theorem \ref{opt_power_allocation} reveals that uniform power allocation can be sub-optimal for certain fading channel distributions (e.g., Type B fading) in the case of $Q_1 < P_a$, which correspond to a low-power regime or equivalently a high-outage regime.
\end{Remark}

Since the solution given in Theorem \ref{opt_power_allocation} is non-decreasing and always satisfies the first $M-1$ constraints in (\ref{source_energy-const}), we obtain the following corollary.
\begin{Corollary}
The optimal solution obtained in Theorem \ref{opt_power_allocation} for Problem (P3) is also optimal for Problem (P1) with $N=1$.
\end{Corollary}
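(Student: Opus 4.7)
The plan is to exploit the fact that Problem (P3) is a relaxation of Problem (P1) with $N=1$: dropping the first $M-1$ of the cumulative constraints in (\ref{source_energy-const}) while retaining the single aggregate constraint $\sum_{j=1}^M P_j \leq MQ_1$ can only enlarge the feasible set, so the optimal value of (P3) lower-bounds that of (P1). It then suffices to exhibit a power profile that (i) attains this lower bound, i.e.\ solves (P3), and (ii) is feasible for the stronger constraint set of (P1). The profile provided by Theorem \ref{opt_power_allocation} will be shown to meet both criteria.

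First I would verify that the profile in Theorem \ref{opt_power_allocation} is monotone non-decreasing in $j$. In the case $Q_1 \geq P_a$ this is immediate, since the allocation is constant. In the case $Q_1 < P_a$ the profile consists of $M-k_0-1$ zeros, followed by the single value $\widehat{P}_0$, followed by $k_0$ copies of $(MQ_1 - \widehat{P}_0)/k_0$. By the search domain $\mathcal{P}$ in (\ref{search_Power}) one has $\widehat{P}_0 \in [0,P_b)$, and by Remark \ref{opt_structure_ave_power} (derived from Propositions \ref{prop_one_low_n1} and \ref{prop_one_high_n1}) one has $(MQ_1 - \widehat{P}_0)/k_0 \geq P_b$, so that
\begin{equation*}
0 \leq \widehat{P}_0 < P_b \leq \frac{MQ_1 - \widehat{P}_0}{k_0},
\end{equation*}
and the sequence is non-decreasing.

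Next I would invoke a simple monotonicity lemma: for any non-decreasing sequence $P_1 \leq P_2 \leq \cdots \leq P_M$, the partial averages $A_m = \frac{1}{m}\sum_{j=1}^m P_j$ are themselves non-decreasing in $m$ (since $A_{m+1} - A_m = (P_{m+1} - A_m)/(m+1) \geq 0$). Hence $A_m \leq A_M$ for every $m$, and therefore
\begin{equation*}
\sum_{j=1}^m P_j \;=\; m A_m \;\leq\; m A_M \;=\; \frac{m}{M}\sum_{j=1}^M P_j \;\leq\; m Q_1, \qquad m=1,\ldots,M.
\end{equation*}
Because the profile in Theorem \ref{opt_power_allocation} satisfies $\sum_{j=1}^M P_j = MQ_1$ with equality (by Remark 2.2 and direct inspection of both cases), applying the lemma yields exactly the cumulative EH constraints in (\ref{source_energy-const}) at $N=1$. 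Thus the profile is feasible for (P1), attains the optimum of the relaxed problem (P3), and is therefore optimal for (P1) with $N=1$.

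There is no real obstacle here; the entire argument is a feasibility check combined with the standard ``relaxation attained implies optimal'' principle. The only mildly delicate points are confirming monotonicity of the Case 2 profile (which rests on the $[0,P_b)$-versus-$[P_b,\infty)$ dichotomy established by Propositions \ref{prop_one_low_n1} and \ref{prop_one_high_n1}) and recognising that monotone partial averages transfer the single aggregate constraint in (P3) to all $M$ cumulative constraints in (P1).
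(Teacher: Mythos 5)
Your proposal is correct and follows essentially the same route as the paper, which justifies the corollary in one line by noting that the Theorem~\ref{opt_power_allocation} profile is non-decreasing and hence automatically satisfies the $M-1$ omitted cumulative constraints, so the relaxation bound is attained. Your explicit partial-averages lemma and the monotonicity check $0 \leq \widehat{P}_0 < P_b \leq (MQ_1-\widehat{P}_0)/k_0$ simply fill in the details the paper leaves implicit.
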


\begin{Remark} \label{remark_when_is_better}
Theorem \ref{opt_power_allocation} shows that in the low-power regime, uniform power allocation may be non-optimal, while on-off power allocation achieves the minimum outage probability. It is easy to check that as $M$ goes to infinity, the optimal power allocation for the case of $Q_1 \leq P_a$ converges to the following binary power allocation: The source transmits with power $P_a$ over $\frac{Q_1}{P_a}$ fraction of the total blocks, and keeps silent in the rest of blocks. This is due to the following facts: 1) $\lim_{M \rightarrow \infty} \left| \lfloor \frac{Q_1}{P_a} M \rfloor - \frac{Q_1}{P_a} M \right| = 0$, which means that $k_0 \rightarrow \frac{Q_1}{P_a} M $ as $M \rightarrow \infty$; and 2) $\frac{MQ_1}{k_0+1} \rightarrow P_a $ and thus $ \frac{MQ_1 - \widehat{P}_0 }{k_0} \rightarrow P_a$ as $M \rightarrow \infty$ for any finite $\widehat{P}_0 $, which implies that to satisfy the average power constraint, we should have $\widehat{P}_0 \rightarrow 0$.
\end{Remark}

\begin{Remark}
Compared to the conventional uniform power allocation \cite{shamai}, the performance gain in the low-power regime with the optimal power allocation is expected to be more substantial as $P_a$ increases. Taking Weibull fading as an example, it is easy to see that a larger $\beta$ corresponds to a larger $P_a$. Therefore, we conclude that for the fading channels with higher diversity orders, the performance gain will become more notable, for any given $Q_1<P_a$.
\end{Remark}

\subsection{Suboptimal Power Allocation with $N=1$}

From Theorem \ref{opt_power_allocation}, it is observed that the threshold $P_a$ plays an important role in the optimal power allocation: If $Q_1 < P_a$, all the non-zero power values, except at most one that is below $P_b$, are identical and as close to $P_a$ as possible. This implies that an on-off two-level power allocation strategy may perform close to the optimal allocation. To avoid the exhaustive search in obtaining the optimal solution, we can simply allocate the power uniformly in the ``on'' state, such that we only need to determine how many blocks should be in the ``on'' state. As such, we propose the following on-off power allocation scheme for the case of $Q_1 < P_a$ as follows:
\begin{align} \label{asym_power_all_ave}
P_j = \left\{
\begin{array}{ll}
  0, & 1 \leq j \leq  M-k_0 \\
  \widetilde{P} , &  M-k_0 +1 \leq j \leq M
\end{array}
\right. ,
\end{align}
where
\begin{align}
k_0 & = \left\lfloor  \frac{Q_1}{ P_a} M  \right\rfloor,~\widetilde{P} = \frac{M Q_1 }{k_0};
\end{align}
while for the case of $Q_1 \geq P_a$, the source transmits with power $Q_1$ for all $M$ blocks, the same as Theorem \ref{opt_power_allocation}. Obviously, when $M$ goes to infinity, (\ref{asym_power_all_ave}) is asymptotically optimal according to Remark \ref{remark_when_is_better}, i.e., $\widetilde{P} \rightarrow P_a$. We thus have the following proposition.

\begin{Proposition} \label{suboptimal_prop}
The power allocation in (\ref{asym_power_all_ave}) is asymptotically optimal for Problem (P1) with $N=1$ as $M$ goes to infinity.
\end{Proposition}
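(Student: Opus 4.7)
My plan is to sandwich $\mathcal{T}^{\ast}(M)$, the optimal value of Problem (P1) with $N=1$, between a universal lower bound derived from Proposition \ref{time_sharing_point} and an upper bound furnished by the feasibility of the suboptimal profile in (\ref{asym_power_all_ave}), and then let $M \to \infty$. The case $Q_1 \ge P_a$ is immediate, since (\ref{asym_power_all_ave}) reduces to the uniform allocation $P_j = Q_1$, which is exactly the optimal solution from Theorem \ref{opt_power_allocation}. So I would focus entirely on $Q_1 < P_a$.

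First I would verify that (\ref{asym_power_all_ave}) satisfies every cumulative EH constraint in (\ref{source_energy-const}) with $n=1$. For $m \le M-k_0$ the partial sum is zero; for $m > M-k_0$ it equals $(m - M + k_0)\widetilde{P} = (m-M+k_0)\,MQ_1/k_0$, which is at most $mQ_1$ if and only if $m(M-k_0) \le M(M-k_0)$, and this holds because $k_0 < M$ whenever $Q_1 < P_a$. Feasibility gives the upper bound $\mathcal{T}^{\ast}(M) \le \mathcal{T}_{\mathrm{sub}}(M)$.

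Next I would apply Proposition \ref{time_sharing_point} termwise to any feasible profile $\{P_j\}$. Summing $\mathcal{F}(P_j) \ge \frac{\mathcal{F}(P_a)-1}{P_a} P_j + 1$ over $j$, dividing by $M$, and using $\sum_j P_j \le MQ_1$ together with the fact that $\mathcal{F}(P_a) < 1$ makes the coefficient negative, yields the $M$-independent lower bound
\[
\mathcal{T}^{\ast}(M) \;\ge\; 1 - \frac{Q_1}{P_a}\bigl(1 - \mathcal{F}(P_a)\bigr) \;=:\; L.
\]

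Finally I would evaluate the suboptimal objective directly: $\mathcal{T}_{\mathrm{sub}}(M) = (1 - k_0/M)\,\mathcal{F}(0) + (k_0/M)\,\mathcal{F}(\widetilde{P})$. From $k_0 = \lfloor Q_1 M/P_a \rfloor$ we get $k_0/M \to Q_1/P_a$, hence $\widetilde{P} = MQ_1/k_0 \to P_a$. Using $\mathcal{F}(0) = 1$ and continuity of $\mathcal{F}$ at $P_a$, this gives $\mathcal{T}_{\mathrm{sub}}(M) \to (1 - Q_1/P_a) + (Q_1/P_a)\mathcal{F}(P_a) = L$. Combined with the sandwich $L \le \mathcal{T}^{\ast}(M) \le \mathcal{T}_{\mathrm{sub}}(M)$, we conclude $\mathcal{T}_{\mathrm{sub}}(M) - \mathcal{T}^{\ast}(M) \to 0$, establishing asymptotic optimality. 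The only mildly delicate point is continuity of $\mathcal{F}$ at $P_a$, which is automatic for the Type B distributions under consideration; the real conceptual engine is the affine support line of Proposition \ref{time_sharing_point}, which is designed to be exactly tight for the limiting binary profile, so the lower and upper bounds match without any slack.
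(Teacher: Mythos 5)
Your proof is correct, and it takes a genuinely different route from the paper's. The paper disposes of this proposition in one line by appealing to Remark \ref{remark_when_is_better}: both the optimal profile of Theorem \ref{opt_power_allocation} and the suboptimal profile (\ref{asym_power_all_ave}) converge to the same limiting binary allocation (power $P_a$ on a $Q_1/P_a$ fraction of blocks), hence the two objective values coincide in the limit. That argument leans on the full characterization of the optimal solution and is somewhat informal about why convergence of the profiles implies convergence of the objective gap. Your sandwich argument is more self-contained and quantitatively sharper: you never need Theorem \ref{opt_power_allocation} at all, only (i) feasibility of (\ref{asym_power_all_ave}) under the cumulative constraints (your check $m(M-k_0)\le M(M-k_0)$ is right), and (ii) the support-line inequality of Proposition \ref{time_sharing_point}, which upon averaging gives the $M$-independent lower bound $L = 1 - \tfrac{Q_1}{P_a}(1-\mathcal{F}(P_a))$ valid for \emph{every} feasible profile. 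Since $\mathcal{T}_{\mathrm{sub}}(M)\to L$ by $k_0/M\to Q_1/P_a$, $\widetilde{P}\to P_a$, $\mathcal{F}(0)=1$, and continuity of $\mathcal{F}$ at $P_a$, the gap closes. What your approach buys is an explicit identification of $L$ as the asymptotic optimal value (the straight-line segment seen in Fig.~\ref{outage_vs_block}) and a proof that does not depend on the correctness of the exhaustive-search characterization; what the paper's approach buys is brevity, given that Theorem \ref{opt_power_allocation} has already been established. Two trivial caveats worth a sentence in a polished write-up: $k_0\ge 1$ is needed for $\widetilde{P}=MQ_1/k_0$ to be defined, which holds once $M> P_a/Q_1$; and the case $Q_1\ge P_a$ is, as you note, immediate since the two allocations coincide exactly.
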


\section{Offline Power Allocation for the Case of $N >1 $}

In this section, we derive the optimal and suboptimal solutions of Problem (P1) for the general case of $N > 1$ with non-causal ESI.

\subsection{Optimal Offline Power Allocation with $N>1$}

First, we give the following results on the structures of the optimal solution to Problem (P1) with $N>1$.
\begin{Proposition} \label{prop_one_low}
For the optimal solution of Problem (P1), there is at most one strictly positive power $P_{i,j}$ that is below $P_b$ defined in Proposition \ref{conv_concav_prop}.
\end{Proposition}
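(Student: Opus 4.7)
The plan is to mirror the contradiction-based argument used for Proposition~\ref{prop_one_low_n1} in the case $N=1$, leveraging concavity of $\mathcal{F}(\cdot)$ on $[0,P_b]$, and to handle the new complication introduced by the cumulative EH constraints in (\ref{source_energy-const}) rather than the single average power constraint in (P3). By Remark~2.1 I can restrict attention to optimal solutions that are non-decreasing over the time index $(i,j)$, which will be crucial for the perturbation direction I pick.

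Concretely, I would assume for contradiction that there exist two distinct indices, call them $(i_1,j_1)$ and $(i_2,j_2)$, chronologically ordered so that $(i_1,j_1)$ precedes $(i_2,j_2)$, at which the optimal powers satisfy $0 < P^*_{i_1,j_1} \le P^*_{i_2,j_2} < P_b$, where the ordering of magnitudes follows from the non-decreasing assumption. For sufficiently small $\epsilon > 0$ (precisely $\epsilon < \min\{P^*_{i_1,j_1},\, P_b - P^*_{i_2,j_2}\}$), define the perturbed profile by
\begin{align*}
\widehat{P}_{i_1,j_1} = P^*_{i_1,j_1} - \epsilon, \qquad \widehat{P}_{i_2,j_2} = P^*_{i_2,j_2} + \epsilon,
\end{align*}
leaving all other powers unchanged. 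Non-negativity holds by construction.

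The key step is verifying that $\{\widehat{P}_{i,j}\}$ still satisfies all EH constraints in (\ref{source_energy-const}). Since the modification only shifts energy \emph{forward} in time from $(i_1,j_1)$ to $(i_2,j_2)$, the left-hand side of the $(n,m)$-th constraint is unchanged whenever $(n,m)$ precedes $(i_1,j_1)$ or follows (or equals) $(i_2,j_2)$, and is reduced by $\epsilon$ for any $(n,m)$ lying strictly between them; in either situation the constraint remains satisfied. Finally, since both $P^*_{i_1,j_1}$ and $P^*_{i_2,j_2}$ lie in $[0,P_b]$, where $\mathcal{F}(\cdot)$ is (strictly) concave by Proposition~\ref{conv_concav_prop}, the same inward-spreading argument as in Proposition~\ref{prop_one_low_n1} yields
\begin{align*}
\mathcal{F}(\widehat{P}_{i_1,j_1}) + \mathcal{F}(\widehat{P}_{i_2,j_2}) < \mathcal{F}(P^*_{i_1,j_1}) + \mathcal{F}(P^*_{i_2,j_2}),
\end{align*}
contradicting the optimality of $\{P^*_{i,j}\}$.

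The only real obstacle is the EH-feasibility bookkeeping in the third step, which requires careful case analysis on where $(n,m)$ sits relative to the two perturbed indices; the concavity argument itself transfers almost verbatim from the $N=1$ proof. Strict concavity of $\mathcal{F}$ on the interior of $[0,P_b]$ (which holds for Type B functions such as the Weibull example in (\ref{outage_weibull})) is what ensures the final inequality is strict.
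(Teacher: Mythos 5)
Your proof is correct and follows essentially the same route as the paper, which simply invokes the concavity-based exchange argument of Proposition~\ref{prop_one_low_n1}. You additionally supply the EH-feasibility bookkeeping (moving energy forward in time from the smaller, earlier block only relaxes the intermediate constraints in (\ref{source_energy-const})) and the strict-concavity remark needed for a genuine contradiction, both of which the paper leaves implicit.
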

\begin{proof}
The proof is similar to that of Proposition \ref{prop_one_low_n1}, and thus omitted for brevity.
\end{proof}

Proposition \ref{prop_one_low} implies that if the available transmit power for any communication block is below the value $P_b$, the source should not transmit with at most one exceptional case until more energy is harvested to make the available power above $P_b$. In addition, for Problem (P1), if the optimal power values $P_{i,j}^*$ and $P_{i,j+1}^*$\footnote{Note that when $j=M$, $P_{i,j+1}$ means $P_{i+1,1}$.} for two consecutive communication blocks are both larger than $P_b$, we obtain the following properties for a non-decreasing optimal power profile.

\begin{Proposition} \label{prop_one_high}
For the non-decreasing optimal solution of Problem (P1), any two consecutive transmit power values that are both larger than $P_b$ must satisfy the following two conditions:
\begin{enumerate}
\item If the EH constraint at the $(i,j)$-th block is not achieved with equality, we have $P_{i,j}^* = P_{i,j+1}^*$;
\item From 1), we infer that if $P_{i,j}^* < P_{i,j+1}^*$, the EH constraint at the $(i,j)$-th block is achieved with equality.
\end{enumerate}
\end{Proposition}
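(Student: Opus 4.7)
The plan is to prove part (1) by a local exchange perturbation that exploits the strict convexity of $\mathcal{F}$ on $(P_b,\infty)$, which was already established in the proof of Proposition \ref{conv_concav_prop}, and then to obtain part (2) as its contrapositive. Fix $(i,j)$ with $P_{i,j}^*,P_{i,j+1}^*>P_b$ and assume the EH constraint in (\ref{source_energy-const}) indexed by $(i,j)$ holds with strict inequality. Because only non-decreasing optimal profiles are considered (the first remark in Section II), if $P_{i,j}^*\neq P_{i,j+1}^*$ we must in fact have $P_{i,j}^*<P_{i,j+1}^*$. I would then perturb by moving a small amount $\epsilon>0$ of power from block $(i,j+1)$ backward to block $(i,j)$, replacing the two powers with $\widehat{P}_{i,j}=P_{i,j}^*+\epsilon$ and $\widehat{P}_{i,j+1}=P_{i,j+1}^*-\epsilon$, and leaving every other power unchanged.

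For feasibility, the key observation is that the cumulative sum $\sum_{k=1}^{n}\sum_{\ell=1}^{m}\widehat{P}_{k,\ell}$ agrees with the corresponding sum of $P^*_{k,\ell}$ for every $(n,m)$ except $(n,m)=(i,j)$: sums preceding $(i,j)$ are untouched, and for $(n,m)$ at or past $(i,j+1)$ the $+\epsilon$ and $-\epsilon$ cancel. Consequently only the single constraint in (\ref{source_energy-const}) indexed by $(i,j)$ is tightened, and it remains satisfied as long as $\epsilon$ is below its strict slack. A further shrinkage of $\epsilon$ ensures that both perturbed powers stay in $(P_b,\infty)$ and that $\widehat{P}_{i,j}\le\widehat{P}_{i,j+1}$. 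The cross-period case $j=M$, where the ``next'' block is $(i+1,1)$, is handled identically, and the second remark in Section II rules out the degenerate case $(i,j)=(N,M)$.

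With feasibility secured, strict convexity of $\mathcal{F}$ on $(P_b,\infty)$ and the standard midpoint-averaging inequality for strictly convex functions give
\begin{equation*}
\mathcal{F}(P_{i,j}^*+\epsilon)+\mathcal{F}(P_{i,j+1}^*-\epsilon)<\mathcal{F}(P_{i,j}^*)+\mathcal{F}(P_{i,j+1}^*),
\end{equation*}
while every other term in the objective (\ref{FD_model1}) is unaffected. The perturbed profile is therefore feasible with a strictly smaller average outage probability, contradicting optimality; this establishes part (1). Part (2) is then immediate by contraposition: if $P_{i,j}^*<P_{i,j+1}^*$, then by part (1) the EH constraint at $(i,j)$ cannot be strict and must hold with equality. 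The only delicate point in the plan is the feasibility bookkeeping --- verifying that a backward energy shift tightens exactly one constraint in (\ref{source_energy-const}) and leaves all others intact --- which requires some care at the EH-period boundary $j=M$ but is otherwise a routine cancellation argument.
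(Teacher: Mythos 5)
Your proof is correct and takes essentially the same route as the paper's: a local exchange of a small amount $\epsilon$ of power between the two consecutive blocks, feasibility guaranteed by the slack in the single tightened EH constraint, strict convexity of $\mathcal{F}$ on $(P_b,\infty)$ forcing a strictly smaller objective and hence a contradiction, with part (2) obtained by contraposition. You simply spell out the feasibility bookkeeping (which cumulative constraints change, the period boundary $j=M$, the terminal block) that the paper compresses into ``it is easy to check.''
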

\begin{proof}
To show the first condition, we first assume that $P_{i,j}^* < P_{i,j+1}^*$. It is easy to check that a new power profile, defined as $\widetilde{P}_{i,j} = P_{i,j}^* +\epsilon$ and $ \widetilde{P}_{i,j+1} = P_{i,j+1}^* - \epsilon $, with $0 < \epsilon <  \frac{ P_{i,j+1}^* - P_{i,j}^* }{2}$, leads to a lower average outage probability, which contradicts with the optimality presumption. Therefore, the first condition is proved. Then, the second condition could be easily proved by using the first condition and the non-decreasing property.
\end{proof}

\begin{Remark} \label{opt_structure_EH}
From Propositions \ref{prop_one_low} and \ref{prop_one_high}, we conclude that the optimal solution of Problem (P1) for the case of $N>1$ with the non-decreasing power profile must posses the following ``save-then-transmit'' structure: Initially, the transmitter keeps silent for a certain number of communication blocks; then, it (possibly) transmits with a power smaller than $P_b$ over one communication block; after that, it keeps transmitting with power larger than $P_b$, and increases power levels right after the EH period where the harvested energy is exhausted due to 2) in Proposition \ref{prop_one_high}.
\end{Remark}

\begin{figure}[!t]
\centering
\includegraphics[width=.5 \linewidth]{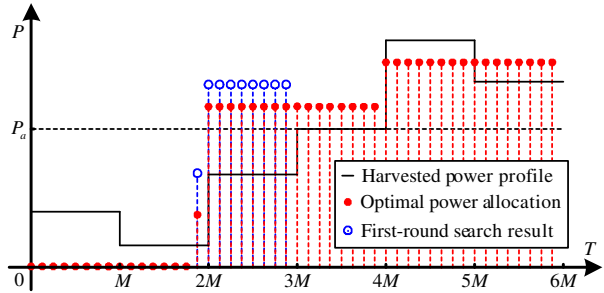}
\caption{An example of the optimal power allocation solution for Problem (P1) with $N=6$.} \label{opt_solution_example}
\end{figure}

Based on the above structure, we propose an algorithm, namely Algorithm \ref{opt_power_allocation_alg}, to compute the globally optimal solution of Problem (P1) with $N>1$, which is summarized in Table \ref{opt_power_allocation_alg}. The proposed algorithm mainly addresses the following two questions: i) when the transmission should start; and ii) how to determine the following parameters: the first positive transmission power $\widetilde{P}_0$, which might be smaller than $P_b$, and $k_0$, which is the number of communication blocks with identical power value $\widetilde{P}_1$ immediately after the first communication block with positive power value $\widetilde{P}_0$. If these two issues are solved, the remaining power allocation can be computed efficiently by Algorithm III in \cite{chuan}, since the the remaining part will be operated in the convex region of the function $\mathcal{F}(\cdot)$. It will be shown next that the above two problems are equivalent to finding an EH period with index $i_{\widetilde{t}}$ such that the result in Case 2) of Theorem \ref{opt_power_allocation} could be applied from the first to the $i_{\widetilde{t}}$-th EH periods to obtain the corresponding parameters $k_0$, $\widetilde{P}_0$, and $\widetilde{P}_1$.

Before presenting the general algorithm, we show an example in Fig. \ref{opt_solution_example} to illustrate the main ideas involved. The proposed algorithm implements a forward search from the first to the $N$-th EH period, in order to determine a particular EH period (indexed by $i_{\widetilde{t}}$) within which there is one (if any) communication block with a positive power value below $P_b$. As shown in Fig. \ref{opt_solution_example}, the following cases may occur in this search:
\begin{enumerate}
\item The EH rate values $Q_1 $ and $ Q_2$ over the 1st and 2nd EH periods are below $P_a$. It is then claimed that $i_{\widetilde{t}} > 2$ and we should continue the searching;

\item At the 3rd EH period, it is observed that $Q_3 < P_a$ and $Q_4,Q_5,Q_6 \geq P_a$. Therefore, we claim that the optimal power allocation may be no smaller than $P_a$ after the 3rd EH period,  i.e., $i_{\widetilde{t}}=3$, and thus the power allocation from the first to the 3rd EH periods can be computed similarly as Case 2) in Theorem \ref{opt_power_allocation} with equivalent average power  $\frac{1}{3} \left( Q_1 + Q_2 + Q_3 \right) $ and the number of blocks $3M$, by which we obtain $k_0$, $\widetilde{P}_0$, and $\widetilde{P}_1$.

Next, we allocate the obtained power values to the blocks starting from the $(3,M)$-th one in a backward manner to obtain a non-decreasing (in term of a forward direction) power profile, as shown by the circles in Fig. \ref{opt_solution_example}. It is easy to see that this power profile satisfies all the EH constraints up to the end of the 3rd EH period. However, since the obtained power value in the 3rd EH period is larger than the EH rate in the 4th EH period (which violates the non-decreasing power allocation for the optimal solution), we conclude that it cannot be optimal. Thus, we claim $i_{\widetilde{t}} = 4$ and use Theorem \ref{opt_power_allocation} to update $k_0$, $\widetilde{P}_0$, and $\widetilde{P}_1$ with the new average power $\frac{1}{4} \left( Q_1 + Q_2 + Q_3 + Q_4 \right) $ and the total number of blocks $4M$. Similarly, we obtain the new power profile from the first to the 4th EH periods, as shown by the dots in Fig. \ref{opt_solution_example}. Then, it is observed that the obtained power values satisfy the following two conditions: i) the power values in the 4th EH period are smaller than the average EH rate in the 5th and 6th EH periods (To comply with the optimal condition given in Proposition \ref{prop_one_high}); and ii) even if we raise the power value $\widetilde{P}_1$ by $\frac{\widetilde{P}_0 }{k_0}$ and correspondingly set $\widetilde{P}_0 = 0$, the new $\widetilde{P}_1$ is still no larger than the average EH rate in the 5th and 6th EH periods (This condition guarantees that no larger-scale search is needed). As such, we conclude that $i_{\widetilde{t}}=4$ is optimal for Problem (P1);

\item In the 5th and 6th EH periods, the harvested average power is larger than both $P_a$ and the allocated power in the 4th EH period, and thus the source should transmit at this average power. Note that in the 5th EH period, the source needs to save certain power for the 6th EH period.
\end{enumerate}

By generalizing the above procedure, we next discuss the details of Algorithm \ref{opt_power_allocation_alg} summarized in Table \ref{opt_power_allocation_alg}, which provides the optimal solution for Problem (P1) with $N>1$. For brevity, we present the details of Algorithm \ref{opt_power_allocation_alg} in Appendix \ref{deter_EH_opt}.

\begin{table}[ht]
\begin{center}
\caption{Algorithm \ref{opt_power_allocation_alg}: Compute the optimal non-decreasing power profile for Problem (P1) with $N>1$.}
\hrule
\vspace{0.3cm}
\begin{enumerate}
\renewcommand{\theenumi}{\arabic{enumi}}
\renewcommand{\theenumii}{\arabic{enumii}}
\renewcommand{\theenumiii}{\arabic{enumiii}}
\renewcommand{\theenumiv}{\arabic{enumiv}}

\makeatletter
\renewcommand{\p@enumii}{\theenumi.}
\renewcommand{\p@enumiii}{\theenumi.\theenumii.}
\renewcommand{\p@enumiv}{\theenumi.\theenumii.\theenumiii.}
\makeatother

\item Compute $P_a$ by Algorithm \ref{search_time_sharing}, and initialize $i=1$ and $temp=0$; repeat
\item Compute $i_s$ and $\widehat{P}_i$ by (\ref{deterministic_first-step1}) and (\ref{deterministic_first-step2}), and check
\begin{enumerate}
\item If $\widehat{P}_i \geq P_a$, the optimal power profile is given by (\ref{full_power}), and let $i=i_s+1$;

\item If $\widehat{P}_i < P_a$, recursively compute $i_s$ and $\widehat{P}_i$ by using (\ref{deterministic_first-step3}) and (\ref{deterministic_first-step4}), respectively. Then, check whether $t_0 \geq 1$ exists to satisfy (\ref{exist_t_0}) or not:
\begin{enumerate}
\item If $t_0$ does not exist, let $i_{s+\widetilde{t}} = N$ and compute $\widetilde{P}_0$ and $\widetilde{P}_1$ by (\ref{search_posi}) and (\ref{search_posi_p1}), respectively. Then, the optimal power allocation is given by (\ref{power_allo_case1}), and let $i=N+1$;

\item \label{recursive_compare}If $t_0$ exists, do the following operations
\begin{enumerate}

\item Let $\widetilde{t} = t_0 $, and compute $\widetilde{P}_0$ and $\widetilde{P}_1$ by (\ref{search_posi}) and (\ref{search_posi_p1}), respectively;

\item While $i_{s+ \widetilde{t}} < N $ and $ \left( (i_0,j_0) = (i_{s+ \widetilde{t}}, M) ~ \text{or}~\widetilde{P}_1 > \widehat{P}_{i+ \widetilde{t}+1 } \right)$, repeat:

    Let $\widetilde{t} = \widetilde{t} +1$, and update $\widetilde{P}_0$ and $\widetilde{P}_1$ by (\ref{search_posi}) and (\ref{search_posi_p1}), respectively.

\item Let $ temp= \widetilde{t}$;

\item \label{opt_power_EH_search_p0}While $i_{s+ \widetilde{t}} < N $ and $\widetilde{P}_0 > k_0 \left( \widehat{P}_{i+\widetilde{t}+1} -\widetilde{P}_1 \right) $, repeat:

    Let $\widetilde{t} = \widetilde{t} +1$, and update $\widetilde{P}_0$ and $\widetilde{P}_1$ by (\ref{search_posi}) and (\ref{search_posi_p1}), respectively. If conditions (\ref{conditions1}) and (\ref{conditions2}) are true, let $temp=\widetilde{t}$, and repeat this step.

\item Update $\widetilde{t} = temp$, and compute $\widetilde{P}_0$ and $\widetilde{P}_1$ by (\ref{search_posi}) and (\ref{search_posi_p1}), respectively. The optimal power profile is given as (\ref{power_allo_case1}), and let $i=i_{s+\widetilde{t}}+1$.
\end{enumerate}
\end{enumerate}
\end{enumerate}
\item Until $i > N$.
\end{enumerate}
\vspace{0.2cm} \hrule \label{opt_power_allocation_alg} \end{center}
\end{table}

\begin{Theorem} \label{deter_optimal_proof}
The solution obtained by using Algorithm \ref{opt_power_allocation_alg} is optimal for Problem (P1).
\end{Theorem}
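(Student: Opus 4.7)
The plan is to establish optimality in two stages: first narrow the search to non-decreasing profiles with the "save-then-transmit" structure forced by Propositions 4.1 and 4.2, then verify that Algorithm 4.1 constructs the unique such profile that is locally optimal on every segment via Theorem 3.1.

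To begin, I would invoke Remark 4.2: any non-decreasing optimal solution of Problem (P1) consists of an initial silent phase, at most one block with power $\widetilde{P}_0 \in (0, P_b)$, and then a sequence of constant-power plateaus of value $\geq P_b$ whose breakpoints can only occur at EH-period boundaries $j=M$ where the cumulative EH constraint in (4) is binding. This decomposes the horizon into contiguous "super-segments," each terminated by a tight cumulative constraint. Inside one super-segment spanning EH periods $\{i_0+1,\dots,i_{\widetilde{t}}\}$, the power-assignment sub-problem is (modulo the structural constraints just described) an instance of Problem (P3) with equivalent average power $\bar{Q} = \frac{1}{i_{\widetilde{t}} - i_0}\sum_{i=i_0+1}^{i_{\widetilde{t}}} Q_i$ and $(i_{\widetilde{t}}-i_0)M$ blocks, so Theorem 3.1 yields the optimal triple $(k_0,\widetilde{P}_0,\widetilde{P}_1)$ for that segment.

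Second, I would show that the forward search in Algorithm 4.1 identifies exactly the right segmentation. The algorithm advances the right end $i_{\widetilde{t}}$ of the current segment until two conditions both hold: (a) the plateau value $\widetilde{P}_1$ resulting from Theorem 3.1 on the candidate segment does not exceed the next period's EH rate, so the profile can remain non-decreasing; and (b) even after absorbing $\widetilde{P}_0$ into $\widetilde{P}_1$—a check required because Proposition 4.1 allows only one sub-$P_b$ block across the entire horizon, not one per segment—the resulting plateau still does not exceed the subsequent EH rate. If either condition fails, the candidate segmentation violates Proposition 4.2 at the segment boundary, and the algorithm correctly extends the segment. Once the first segment is committed, the argument repeats by induction on the number of remaining EH periods; the tail problem has all power values lying in the convex regime $[P_b,\infty)$, so Algorithm III of [5] (already used as a black box) supplies the remainder.

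The main obstacle I anticipate is proving correctness of the inner search, specifically that the termination condition $\widetilde{P}_0 \leq k_0(\widehat{P}_{i+\widetilde{t}+1}-\widetilde{P}_1)$ in Step 2.b.ii.4 is tight: once it holds, no further extension of the boundary can strictly decrease the average outage. This requires an exchange argument showing that extending the boundary pushes $\widetilde{P}_1$ towards $P_a$ and shrinks $\widetilde{P}_0$ towards $0$, but once absorption is feasible any further extension would either violate Proposition 4.2 downstream (power increase without a tight constraint) or be matched exactly by the outage reduction predicted by Theorem 3.1 applied in reverse, yielding no net gain. Reconciling this exchange argument with the flooring in $k_0 = \lfloor \bar{Q} M / P_a \rfloor$ as the boundary moves—so that moving $i_{\widetilde{t}}$ by one may jump $k_0$ discontinuously—is the delicate bookkeeping that completes the proof.
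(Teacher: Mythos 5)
There is a genuine gap, and you have in fact named it yourself: the entire difficulty of Theorem \ref{deter_optimal_proof} lives in the step you defer as ``delicate bookkeeping.'' Your first stage (structural reduction via Propositions \ref{prop_one_low} and \ref{prop_one_high}, segmentation at tight cumulative constraints, and per-segment optimality via Theorem \ref{opt_power_allocation}) and your treatment of the tail (all powers in $[P_b,\infty)$, hence the convex-regime argument of \cite{yang,chuan} applies) are sound and match ingredients the paper also uses. But the claim that the forward search ``identifies exactly the right segmentation'' is precisely what must be proved, and your sketch of how to close it is not merely incomplete --- it points in a direction that does not work. You propose a tightness/monotonicity argument: once the termination condition of Step (\ref{opt_power_EH_search_p0}) holds, ``no further extension can strictly decrease the average outage.'' The algorithm itself contradicts this picture: it does \emph{not} stop at the first index where extension looks unprofitable; it continues extending, stores the best candidate seen so far in the variable $temp$, and decides by the explicit numerical comparison (\ref{conditions2}) between the outage achieved by different candidate values of $\widetilde{t}$, together with the feasibility check (\ref{conditions1}) (which your segmentation argument never verifies --- applying Theorem \ref{opt_power_allocation} to a multi-period super-segment with an ``equivalent average power'' can produce a profile violating intermediate EH constraints, which is why (\ref{conditions1}) exists). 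An exchange argument establishing monotone improvement followed by monotone degradation in $\widetilde{t}$ would make $temp$ and condition (\ref{conditions2}) unnecessary; since the objective is not monotone in $\widetilde{t}$ (the flooring in $k_0$ and the discrete jumps in $\widehat{P}_{i+t}$ see to that), your anticipated exchange argument cannot be carried out as described.

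For comparison, the paper closes this gap by an entirely different mechanism: it takes a hypothetical optimal profile $\{P^*_{i,j}\}$ and the algorithm's output $\{P^{\star}_{i,j}\}$, locates the first block where they differ, and runs a case analysis on whether that block precedes, equals, or follows the unique sub-$P_b$ block $(i_1,j_1)$. In each case it derives a contradiction --- either a violated or slack EH constraint where equality is forced by Proposition \ref{prop_one_high}, or a configuration that coincides with one of the candidates the search in Step (\ref{opt_power_EH_search_p0}) already evaluated and rejected. That last point is the key: global optimality of the chosen $\widetilde{t}$ follows because every alternative segmentation the adversary could propose is shown to be \emph{one of the iterates the algorithm explicitly compared}, not because of any local stopping criterion. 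Your proposal would need to be restructured around such an exhaustive-comparison argument (or an equivalent one) to constitute a proof.
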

\begin{proof}
See Appendix \ref{proof_final}.
\end{proof}

\begin{Remark}
In \cite{yang,rui}, the authors studied the power allocation problem to maximize the throughput over AWGN channels with non-causal ESI, where the throughput function is assumed to be concave. The optimal power profile for the throughput maximization problem is shown to have a continuous, non-decreasing, and piecewise-constant structure. In this paper, we consider a different problem by minimizing the outage probability over fading channels with non-causal ESI, which is non-convex due to the concave-convex shape of the outage probability function (Type B). Our results show that an on-off transmission strategy is optimal, for which the source should only transmit when the available power is sufficiently large. It is worth noting that our problem will degrade to the one considered in \cite{yang,rui} if the convex Type A outage probability function is considered.
\end{Remark}

\begin{Remark}
In Algorithm \ref{opt_power_allocation_alg}, it is observed that exhaustive searches are needed to solve the optimization problem defined in (\ref{search_posi}). Since we need to repeat the one-dimensional search in Step (\ref{recursive_compare}) of Algorithm \ref{opt_power_allocation_alg}, these search operations are executed at most $N$ times. Note that except these one-dimensional searches, the computation complexity of Algorithm \ref{opt_power_allocation_alg} is on the order of $\mathcal{O} (N^2)$ (see (\ref{deterministic_first-step1}) and (\ref{deterministic_first-step3})). Thus, the main computation burden of the proposed algorithm is due to the one-dimensional searches in (\ref{search_posi}). However, compared to searching exhaustively the optimal power allocation of Problem (P1) over the total number of $NM$ communication blocks, the computation complexity is greatly reduced with the proposed algorithm.
\end{Remark}

\begin{Remark}
By a similar argument to Remark \ref{remark_when_is_better}, the power allocation profile given in Algorithm \ref{opt_power_allocation_alg} converges to the following threshold-based transmission scheme as $M \rightarrow \infty$: At the $i$-th EH period, first use (\ref{deterministic_first-step1})-(\ref{deterministic_first-step2}) to compute the possible transmission power $\widehat{P}_i$. If $\widehat{P}_i$ is no smaller than $P_a$, we transmit with power $\widehat{P}_i$ over the $i$-th to the $i_s$-th EH periods; otherwise, keep the first $\widetilde{k} =\left\lfloor(1-\frac{\widehat{P}_i}{P_a})(i_s - i +1)M \right\rfloor$ communication blocks silent and then transmit with power $\frac{ (i_s - i +1)M }{ (i_s - i +1)M  - \widetilde{k}  } \widetilde{P}_i$ over the rest communication blocks from the $i$-th to the $i_s$-th EH periods.
\end{Remark}

\subsection{Suboptimal Offline Power Allocation with $N>1$}

In this subsection, we propose a suboptimal algorithm for Problem (P1) with lower complexity than that of Algorithm \ref{opt_power_allocation_alg}, and show that it is asymptotically optimal as $M$ goes to infinity.

Similar to the case of $N=1$ in Section III-C, we propose a suboptimal power allocation algorithm, namely Algorithm \ref{asym_opt_power}, for Problem (P1) with $N>1$, which is shown in Table  \ref{asym_opt_power}. The main idea of this algorithm is described as follows. From the first EH period, we search the index of the next possible power exhausting EH period by (\ref{deterministic_first-step1}) and (\ref{deterministic_first-step2}). If $\widehat{P}_{i} \geq P_a$, we claim that the source should transmit with its best effort, i.e., the power allocation is given as (\ref{full_power}); otherwise, an on-off transmission is adopted to guarantee that the allocated power is equal to or larger than $P_a$. The power allocation is thus given as
\begin{align} \label{power_onoff}
P_{p,q} = \left\{
\begin{array}{ll}
  0, & (i,1) \leq (p,q) \leq (i_0,j_0) \\
  \frac{(i_s - i+1) M}{k_0} \widehat{P}_{i} , & (i_0,j_0) < (p,q) \leq (i_s,M)
\end{array}
\right.,
\end{align}
where $i_0$, $j_0$, and $k_0$ are computed by (\ref{posi_power_index_i}), (\ref{posi_power_index_j}), and (\ref{posi_power_index_k0}), respectively.

Note that from the $i$-th to the $i_s$-th EH periods, the power profile is obtained by the sub-optimal solution proposed in Section III-C with $N=1$, while in Algorithm \ref{asym_opt_power}, we still use (\ref{deterministic_first-step1}) and (\ref{deterministic_first-step2}) to determine $i_s$ as in the optimal solution given in Algorithm \ref{opt_power_allocation_alg}.

\begin{table}[ht]
\begin{center}
\caption{Algorithm \ref{asym_opt_power}: suboptimal power allocation for Problem (P1) with $N>1$.}
\hrule
\vspace{0.3cm}
\begin{itemize}
\item Set $i=1$;
\item While $i \leq N$, repeat the following steps:
\begin{enumerate}
\item Compute $\widehat{P}_{i}$ and $i_s$ by using (\ref{deterministic_first-step1}) and (\ref{deterministic_first-step2});
\item If $\widehat{P}_{i} \geq P_a$, the power allocation is given by (\ref{full_power}); otherwise, the power allocation is given by (\ref{power_onoff}).

\item Let $i=i_s+1$.
\end{enumerate}
\item Algorithm ends.
\end{itemize}
\vspace{0.2cm} \hrule \label{asym_opt_power} \end{center}
\end{table}

\begin{Proposition}
The solution obtained by Algorithm \ref{asym_opt_power} is asymptotically optimal for Problem (P1) with $N>1$ as $M$ goes to infinity.
\end{Proposition}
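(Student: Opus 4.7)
The plan is to sandwich both $V_{\text{sub}}$ (the value achieved by Algorithm \ref{asym_opt_power}) and $V_{\text{opt}}$ (optimal by Theorem \ref{deter_optimal_proof}) between a common asymptotic benchmark as $M \to \infty$. Since $V_{\text{opt}} \leq V_{\text{sub}}$ always, it suffices to exhibit a lower bound on $V_{\text{opt}}$ that $V_{\text{sub}}$ asymptotically attains.

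First, I would partition the $N$ EH periods into Algorithm \ref{asym_opt_power}'s segments via the exhaustion rule (\ref{deterministic_first-step1})--(\ref{deterministic_first-step2}). On any segment $[i,i_s]$ of length $L := (i_s-i+1)M$ and per-block average EH rate $\widehat{P}_i$, if $\widehat{P}_i \geq P_a$ then both algorithms assign the same uniform power $\widehat{P}_i$ to every block (Algorithm \ref{opt_power_allocation_alg} enters case 2.a in its forward search), so these segments contribute identically to $V_{\text{sub}}$ and $V_{\text{opt}}$. The gap is therefore entirely concentrated on the low-power segments where $\widehat{P}_i < P_a$.

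On a low-power segment, Proposition \ref{time_sharing_point} supplies the pointwise bound $\mathcal{F}(P) \geq g(P) := 1 + (\mathcal{F}(P_a)-1)P/P_a$. Summing blockwise and invoking the affinity of $g$, any feasible allocation that consumes the full segment energy $L\widehat{P}_i$ satisfies $\sum \mathcal{F}(P) \geq L \cdot g(\widehat{P}_i)$; and by the non-decreasing structure of the optimal profile (Proposition \ref{prop_one_high}) together with the end-of-horizon exhaustion noted after Problem (P1), the optimal allocation indeed consumes this much on each segment in the asymptotic limit. Meanwhile, Algorithm \ref{asym_opt_power} assigns power $\widetilde{P} = L\widehat{P}_i/k_0$ to $k_0 = \lfloor L\widehat{P}_i/P_a\rfloor$ blocks and zero to the rest, contributing $k_0 \mathcal{F}(\widetilde{P}) + (L-k_0)$ to the outage sum. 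Mirroring the limit argument of Remark \ref{remark_when_is_better}, $\widetilde{P} \to P_a$ and $k_0/L \to \widehat{P}_i/P_a$ as $M \to \infty$, so by continuity of $\mathcal{F}$ this contribution converges to $L \cdot [(\widehat{P}_i/P_a)\mathcal{F}(P_a) + (1 - \widehat{P}_i/P_a)] = L \cdot g(\widehat{P}_i)$, exactly matching the lower bound. Summing across all segments, $V_{\text{sub}}$ converges to a lower bound on $V_{\text{opt}}$, forcing the gap to vanish.

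The main obstacle I foresee is justifying the aggregated chord lower bound on $V_{\text{opt}}$ despite Algorithm \ref{opt_power_allocation_alg}'s richer structure: its exceptional transitional block with power $\widetilde{P}_0 \in [0,P_b)$ (see the save-then-transmit structure of Remark \ref{opt_structure_EH}) and its possible merging of adjacent Algorithm-2 segments in steps 2.b.ii--2.b.iii. For the exceptional block, its contribution to the outage sum deviates by at most one from the chord value, which is $O(1/(NM))$ after normalization and hence negligible. For the segment merging, the affinity of $g$ implies that splitting or aggregating segments leaves the total chord sum invariant whenever total segment energy is preserved across the merge, so Algorithm \ref{opt_power_allocation_alg}'s grouping cannot strictly beat the chord-based bound in any asymptotically relevant way.
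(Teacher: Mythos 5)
Your route differs from the paper's: the paper simply declares the proof ``similar to that of Proposition \ref{suboptimal_prop}'', i.e., it argues that the profile produced by Algorithm \ref{asym_opt_power} converges to the optimal profile of Algorithm \ref{opt_power_allocation_alg} as $M\to\infty$ (cf.\ Remark \ref{remark_when_is_better} and the limiting threshold scheme described after Theorem \ref{deter_optimal_proof}), whereas you build an explicit lower bound on $V_{\mathrm{opt}}$ from the chord inequality of Proposition \ref{time_sharing_point} and show that $V_{\mathrm{sub}}$ attains it. The sandwich idea is attractive because it avoids tracking the internals of Algorithm \ref{opt_power_allocation_alg} on the low-power segments.

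However, the lower bound on $V_{\mathrm{opt}}$ is not actually established, and the two assertions that carry its weight do not follow from what you cite. First, ``the optimal allocation indeed consumes this much [the staircase energy $L\widehat{P}_i$] on each segment'': non-decreasingness (Proposition \ref{prop_one_high}) and exhaustion at the final block only pin down the total energy over the whole horizon, not the per-segment budgets; a priori the optimizer could carry energy forward out of a low-power segment into a later high-power one, in which case your per-segment chord bound with budget $L\widehat{P}_i$ is not valid for it --- the receiving segment then beats $L'\mathcal{F}(\widehat{P}_{i'})$, and there is no compensating penalty on the donor segment because $g$ is affine and the chord bound only sees total energy. Second, ``segments with $\widehat{P}_i\geq P_a$ contribute identically to both algorithms'': the two segmentations need not coincide, since Step 2.b.ii of Algorithm \ref{opt_power_allocation_alg} can merge a low-power segment with subsequent ones, and on a merged region that swallows a genuinely high-power staircase segment the chord value $L_2\, g(\widehat{P}_2)$ sits strictly below the $L_2\,\mathcal{F}(\widehat{P}_2)$ that Algorithm \ref{asym_opt_power} delivers there, so your affinity-of-$g$ remark does not close the sandwich. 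Both holes are patchable --- forward transfer past a low-power segment is asymptotically never profitable because the chord through $(0,1)$ and $(P_a,\mathcal{F}(P_a))$ is tangent to $\mathcal{F}$ at $P_a$, so by convexity $\mathcal{F}'(P)\geq \mathcal{F}'(P_a)=\bigl(\mathcal{F}(P_a)-1\bigr)/P_a$ for $P\geq P_a$ and the marginal gain at the receiver cannot exceed the marginal loss at the donor; and merging with a segment whose rate exceeds $P_a$ by a fixed margin cannot occur for large $M$ since $\widetilde{P}_1-P_a=O(1/M)$ --- but as written these steps are missing and the argument is incomplete without them.
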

\begin{proof}
The proof is similar to that of Proposition \ref{suboptimal_prop} for $N=1$, and thus omitted for brevity.
\end{proof}

\section{Online Power Allocation}

In this section, we consider Problem (P2) for the case with only causal ESI at the transmitter and $N>1$. First, we show that the optimal solution of Problem (P2) can be obtained by applying dynamic programming and the results in Section III for the case of $N=1$. Then, we propose a suboptimal online power allocation algorithm with lower complexity than the optimal online solution.

\subsection{Optimal Online Power Allocation}

In general, the optimization of $P_{i,j}$ for Problem (P2) with $N>1$ cannot be performed independently over each EH period, since the battery states defined in (\ref{battery_dynamic}) are coupled over time. Thus, we adopt a dynamic programming method to solve this problem, as stated in the following proposition.

\begin{Proposition}
For Problem (P2.$n$), $1 \leq n \leq N$, and the given initial states $Q_n$ and $B_{n} = B_{n,1}$, the minimum average outage probability $\mathcal{T}_n^*$ is given by $J_{n}(Q_n,B_n)$, which can be computed recursively from $J_{N} (Q_N,B_N)$ to $J_{N-1} (Q_{N-1},B_{N-1})$, until $J_{n}(Q_n,B_n)$. The sequence of optimization problems are constructed as:
\begin{align}
(\text{P}4.N) ~~~&J_{N} (Q_N,B_{N})  = \min_{ \{ P_{N,j} \} } ~\sum_{j=1}^M \mathcal{F} \left( P_{N,j} \right)  \\
\text{s.t.} ~~ &\sum_{j=1}^M P_{N,j} \leq B_{N} + M Q_N,
\end{align}
and for $1 \leq i < N$,
\begin{align}
(\text{P}4.i) ~~~ &J_{i} ( Q_i, B_{i})  =  \min_{0 \leq B_{i+1} \leq B_{i} + M Q_i} \min_{ \{ P_{i,j} \} }  ~\sum_{j=1}^M \mathcal{F} \left( P_{i,j} \right)  + \overline{J}_{i+1} (Q_i, B_{i+1}) \\
\text{s.t.} ~~& \sum_{j=1}^M P_{i,j} \leq B_{i} - B_{i+1} + M Q_i,
\end{align}
where $B_{i} = B_{i,1}$, and
\begin{align}
\overline{J}_{i+1}  (Q_i, B_{i+1})  =  \mathbb{E}_{Q_{i+1}} \left[ \left. J_{i+1} ( Q_{i+1}, B_{i+1}) \right| Q_i \right].
\end{align}
\end{Proposition}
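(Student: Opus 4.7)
The plan is to prove the proposition by backward induction on $i$ from $N$ down to $n$, which is Bellman's principle of optimality applied to the MDP identified at the end of Section~II. At the start of the $i$-th EH period the sufficient statistic for the future is the pair $(Q_i, B_i)$ with $B_i = B_{i,1}$: by the first-order Markov assumption on $\{Q_i\}$, conditional on $Q_i$ the future rates $Q_{i+1}, \ldots, Q_N$ are independent of $Q_1, \ldots, Q_{i-1}$; and via (\ref{battery_dynamic}) the within-period battery trajectory is a deterministic function of $B_i$ and the action $(P_{i,1}, \ldots, P_{i,M})$. Hence any admissible online policy restricted to stages $\geq i+1$ factors through $(Q_i, B_{i+1})$, where $B_{i+1} = B_i + MQ_i - \sum_{j=1}^M P_{i,j}$.

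I would begin with the base case $i=N$. Since no expectation term appears in (P2.$N$), that problem reduces to (P4.$N$) once the $M$ within-period battery constraints are collapsed into the single aggregate constraint, yielding $\mathcal{T}_N^* = J_N(Q_N, B_N)$. For the inductive step, assuming that $\mathcal{T}_{i+1}^* = J_{i+1}(Q_{i+1}, B_{i+1})$ holds for every admissible $(Q_{i+1}, B_{i+1})$, I would split the objective of (P2.$i$) into the current-period contribution $\sum_{j=1}^M \mathcal{F}(P_{i,j})$ and the tail contribution, apply the tower property to condition on $Q_{i+1}$, and use the induction hypothesis to identify the conditional tail with $\overline{J}_{i+1}(Q_i, B_{i+1})$. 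Treating the end-of-period battery level $B_{i+1} \in [0, B_i + MQ_i]$ as an auxiliary decision variable, so that the within-period budget becomes $B_i - B_{i+1} + MQ_i$, then reproduces exactly the nested minimization in (P4.$i$).

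The main obstacle, and the only step that is not essentially bookkeeping, is justifying why (P4.$i$) retains only the aggregate constraint $\sum_j P_{i,j} \leq B_i - B_{i+1} + MQ_i$ in place of the full sequence of within-period constraints $P_{i,j} \leq B_{i,j} + Q_i$ implied by (\ref{battery_dynamic}) with $B_{i,j} \geq 0$. Since $\sum_j \mathcal{F}(P_{i,j})$ depends only on the unordered multiset of power values, any feasible tuple may be relabelled in non-decreasing order without altering the cost, and this reordering is consistent with the non-decreasing policy preference already adopted in Remark~2.1. For such a relabelling the partial sums obey $\sum_{k=1}^{j-1} P_{i,k} \leq \frac{j-1}{M}\sum_{k=1}^M P_{i,k} \leq \frac{j-1}{M}(B_i - B_{i+1} + MQ_i) \leq B_i + (j-1)Q_i$ for every $j \leq M$, using $B_{i+1}\geq 0$ and $\frac{j-1}{M}\leq 1$, so the cumulative battery constraints are automatically met. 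Combining this equivalence with the Bellman decomposition above closes the induction and yields $\mathcal{T}_n^* = J_n(Q_n, B_n)$.
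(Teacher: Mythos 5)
Your proposal is correct and follows essentially the same route as the paper: a backward Bellman/dynamic-programming decomposition with $(Q_i,B_i)$ as the state, followed by elimination of the $M-1$ intermediate within-period EH constraints by restricting to non-decreasing power profiles (the paper invokes "a similar argument as in Section III-B" for this step, which your explicit partial-sum inequality for sorted sequences correctly fills in). No gaps.
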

\begin{proof}
The proof directly follows by applying the Bellman's equation \cite{bell} and using (\ref{battery_dynamic}). Note that in each Problem (P$4.i$), there should be $M$ EH constraints, i.e., $\sum_{j=1}^m P_{i,j} \leq B_{i} - B_{i+1} + m Q_i$, $1 \leq m \leq M$; however, by a similar argument as in Section III-B, we could eliminate the first $M-1$ EH constraints by finding an optimal solution $P_{i,j}$ in a non-decreasing manner.
\end{proof}

Note that Problems (P$4.i$), $1 \leq i \leq N-1$, can be solved by first applying Theorem \ref{opt_power_allocation} with fixed $B_{i+1}$, and then searching over all possible $B_{i+1}$'s, with $0 \leq B_{i+1} \leq B_{i} + M Q_i$. Therefore, the above MDP problems can be solved by dynamic programming and applying our previous results for the case of $N=1$.

\subsection{Suboptimal Online Power Allocation}

In this subsection, we propose a suboptimal online power allocation algorithm called ``$q$-period look-ahead'', which is based on the current battery state and the predicted power to be harvested for the next $q-1$ EH periods, with $q \geq 2$.

We assume that the EH process $\{ Q_i \}$ is a discreet-time first-order Markov process, and the future ESI can be predicted as
\begin{align}
\widehat{Q}_j = \mathbb{E} \left(  Q_j | Q_i \right), ~i+1 \leq j \leq \min \left\{ i+q-1, N \right\},
\end{align}
where the integer $q$ indicates the prediction window length from the current EH period. Here we only assume that the mean values of the future harvested energy are known, while the exact distribution may not be known to the transmitter, which greatly relaxes the requirements for computing the online power allocation.

Then, with the current energy profile $\widehat{Q}_i = Q_i + \frac{B_i}{M}$ and the predicted ones, $\widehat{Q}_j $, $i+1 \leq j \leq \min \left\{ i+q-1, N \right\} $, we compute the power allocation by using either the optimal or the suboptimal alogrithms, i.e., Algorithm \ref{opt_power_allocation_alg} or \ref{asym_opt_power}, and thereby adopt the power allocation for the current $i$-th EH period, i.e., from the $(i,1)$-th to the $(i,M)$-th communication blocks. Similarly, at the next EH period, we repeat the above procedure to obtain its corresponding power allocation, until the $N$-th EH period is reached. Evidently, if $q=1$, the proposed scheme becomes a ``greedy'' power allocation, i.e., at the end of each EH period, all the stored harvested energy is used up. In the next section, we will evaluate the performance of this suboptimal online algorithm.

\section{Numerical Results}

For the purpose of exposition, we consider Weibull fading with $\beta = 8$ throughout this section.

\subsection{The Case of $N=1$}

\begin{figure}[!t]
\centering
\includegraphics[width=.6 \linewidth]{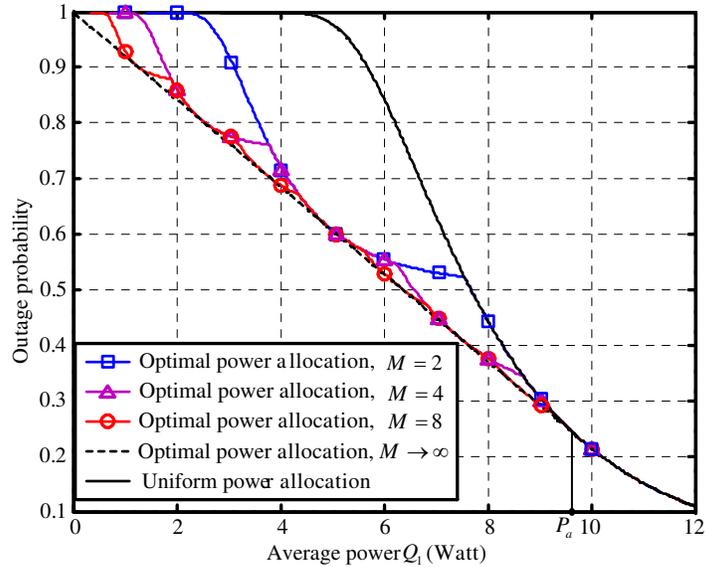}
\caption{Outage probability over the Weibull fading, with $\beta=8$ and $R=3$ bits/sec/Hz.} \label{outage_vs_block}
\end{figure}

In this subsection, we show some numerical results to validate our analysis in Section III for the case of $N=1$. In Fig. \ref{outage_vs_block}, we plot the outage probability vs. average transmit power with different numbers of communication blocks, $M$. It is observed that the outage probability with the optimal power allocation is no larger than that with uniform power allocation over the region $Q_1<P_a$. In particular, the minimum outage probability with the optimal allocation for the case of $M \rightarrow \infty$ over this region is a straight line connecting the points $[0,1]$ and $\left[ P_a,\mathcal{F}(P_a) \right]$. Moreover, as $M$ increases, the minimum outage probability converges to that with $M \rightarrow \infty$.

\begin{figure}[!t]
\centering
\includegraphics[width=.6 \linewidth]{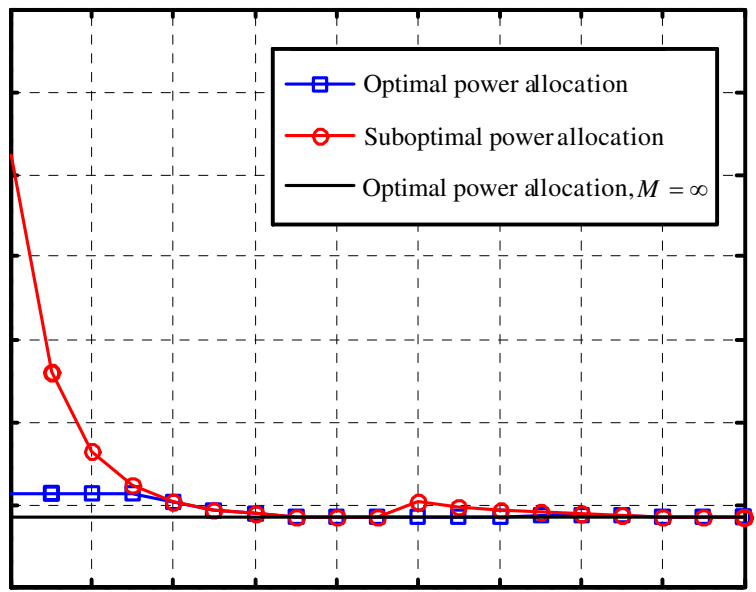}
\caption{Outage probability comparison for the optimal and suboptimal power allocation schemes, with $\beta=8$, $Q_1 = 9$, and $R=3$ bits/sec/Hz.} \label{outage_asym}
\end{figure}

In Fig. \ref{outage_asym}, we plot the average outage probability versus $M$ with both the proposed optimal and suboptimal power allocation schemes. It is observed that the outage probabilities of both schemes converge to the same value as $M$ goes to infinity, which is in accordance with Proposition \ref{suboptimal_prop}.

\subsection{The Case of $N>1$}

In this subsection, we compare the performance of the proposed offline algorithm against that of the online power allocation algorithm, for the case of $N>1$ with non-causal versus causal ESI, respectively. We assume an i.i.d. model for the EH source, where the EH rate at each EH period has three equal-probability states: 0, $P$, and $2P$, with $P$ the average EH rate value. For simplicity, we also assume that $M=1$ and $N=20$. As shown in Fig. \ref{comparison_EH_power_allocation}, we plot the outage probability of different algorithms with $P$. As a good approximation for the continuous-state MDP problem in (\ref{problem_opt_online}), we model the battery states as a finite set, where the difference between two adjacent states is set as 0.01. It is observed that the optimal online algorithm performs very close to the optimal offline algorithm. For the ``$q$-period'' look-ahead algorithm, we choose three values for $q$: $q=1$, $q=2$, and $q=N$. It is observed that a properly chosen $p$ may yield a better outage performance, e.g., in Fig. \ref{comparison_EH_power_allocation}, the $2$-period look-ahead scheme outperforms the other two cases over most of the $P$ values.

\begin{figure}[!t]
\centering
\includegraphics[width=.6 \linewidth]{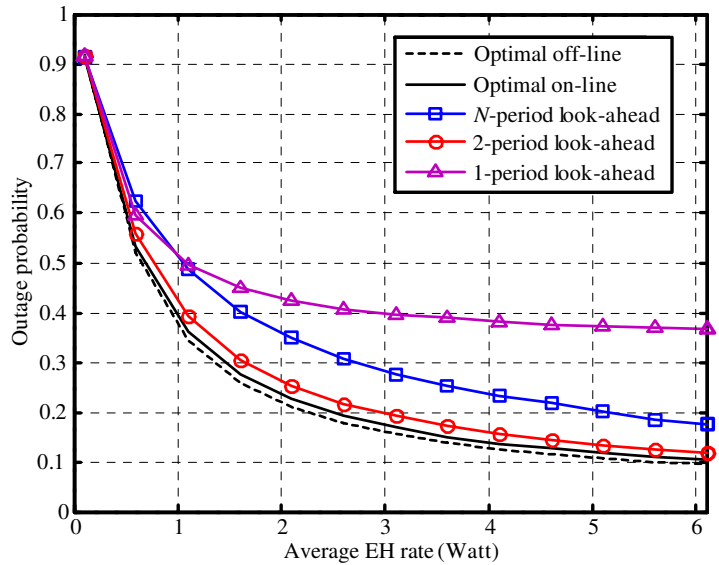}
\caption{Comparison of various power allocation algorithms, with $R=0.5$ bits/sec/Hz, $N=20$ and $M=1$.} \label{comparison_EH_power_allocation}
\end{figure}

\section{Conclusion}

In this paper, we studied the optimal power allocation to minimize the averaged outage probability in fading channels with EH constraints. We showed that for most of practical fading channels, the outage probability function is in general non-convex over the transmit power, which results in non-convex power allocation problems. We derived the globally optimal solutions for such problems by exploiting some interesting properties of the outage probability function and the causality structure of the EH constraints. It was shown that the optimal offline power allocation follows an interesting ``save-then-transmit'' protocol. For the special case of $N=1$, our results revisit the classic outage capacity problem with new observations. Furthermore, we considered the practical case with only causal ESI and proposed both optimal and suboptimal online power allocation schemes by applying dynamic programming and exploring the structure in the offline solution.

% For peer review papers, you can put extra information on the cover
% page as needed:
% \ifCLASSOPTIONpeerreview
% \begin{center} \bfseries EDICS Category: 3-BBND \end{center}
% \fi
%
% For peerreview papers, this IEEEtran command inserts a page break and
% creates the second title. It will be ignored for other modes.
\IEEEpeerreviewmaketitle

\appendices

\section{Proof of Proposition \ref{time_sharing_point}} \label{proof_PA}
To prove this proposition, we draw a line passing through $(0,1)$ and any other point $(P, \mathcal{F}(P))$ on the outage probability curve defined in (\ref{outage_weibull}), and denote its slope as $\mathcal{S}(P)$, $P > 0$. Then, we can prove that the slope of this line is lower-bounded, which means that the desired point can be found corresponding to this lower bound.

To see this point, we check the slope function:
\begin{align}
\mathcal{S}(P) = \frac{\mathcal{F}(P) - 1}{P},~ P >0.
\end{align}
It is observed that $ \lim_{P \rightarrow + \infty }\mathcal{S}(P) = 0$, since the numerator is always bounded. In addition, $\lim_{P \rightarrow 0 }\mathcal{S}(P) = 0$, since $\exp \left(  - \left( \frac{2^R-1}{P} \right)^{\beta/2} \right) $ is a higher order infinitesimal of $P$ for $P \rightarrow 0$. Then, for a large enough value $A>0$, we conclude that: 1) If we further define $\mathcal{S}(0) = 0$, $\mathcal{S}(P)$ is both upper- and lower-bounded over $[0,A]$, since it is a continuous function over this region; and 2) $\mathcal{S}(P)$ is lower-bounded over $[A,+\infty)$, since it is increasing over this region  (note that $\mathcal{S}(P)$ is always non-positive). Based on the above results, this proposition is proved.

\section{Proof of Theorem \ref{opt_power_allocation}} \label{opt_solution}

First, we present the following observations for the considered Type B function.
 
\begin{Remark} \label{property_type_b_function}
For the considered concave-convex shape Type B function $\mathcal{F}(P)$, it can be proven that: 1) For $0 \leq X_1 \leq X_2 \leq X_3 \leq P_a$, the line segment (denoted as $\mathcal{L}_1(x), x\in [X_1,X_2]$) connecting the two points $(X_1,\mathcal{F}(X_1))$ and $(X_2,\mathcal{F}(X_2))$ is always above the one (denoted as $\mathcal{L}_2(x), x\in [0,X_3]$) connecting the two points $(0,1)$ and $(X_3,\mathcal{F}(X_3))$, i.e., for arbitrary $X_0$, $X_1 \leq X_0 \leq X_2$, it follows that $\mathcal{L}_1(X_0) \geq \mathcal{L}_2(X_0)$; and 2) for $ P_a \leq X_0 $ and $0 \leq X_1 \leq X_0 \leq X_2$, the line segment (denoted as $\mathcal{L}_3(x),x\in [X_1,X_2]$) connecting the two points $(X_1,\mathcal{F}(X_1))$ and $(X_2,\mathcal{F}(X_2))$ is always above the point $(X_0,\mathcal{F}(X_0))$, i.e., $\mathcal{L}_3(X_0) \geq \mathcal{F}(X_0)$.
\end{Remark}

Next, we prove the second case of $Q_1 < P_a$ by starting with the following lemma. Later we will show that the proof for the first case of $Q_1 \geq P_a$ is only a special case of this one.

\begin{Lemma} \label{ave_power_lemma}
For the following function
\begin{align}
f_{k+1}(P) = \frac{1}{k+1} \left[ \mathcal{F}(P) + k \mathcal{F} \left( \frac{MQ_1 - P}{ k} \right) \right], \label{lemma_ave_outage}
\end{align}
where $0 \leq P \leq \frac{MQ_1}{k+1}$ and $k $ is chosen from the set $ \{ 1,\cdots,M-1 \}$, its minimum value is attained at: $P = 0 $, if $ P_a \geq \frac{MQ_1}{k} $; $P=\frac{MQ_1}{k} $, if $P_a < \frac{MQ_1}{k+1} $; or search over $0 \leq P \leq \frac{MQ_1}{k+1}$, if $P_a \in \left(  \frac{MQ_1}{k+1}, \frac{MQ_1}{k} \right)$.
\end{Lemma}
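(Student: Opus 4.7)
The plan rests on a single geometric reformulation. Since
$\frac{1}{k+1}\cdot P + \frac{k}{k+1}\cdot \frac{MQ_1-P}{k} = \frac{MQ_1}{k+1}$,
the value $f_{k+1}(P)$ is exactly the ordinate at the fixed abscissa $x=\frac{MQ_1}{k+1}$ of the chord joining $(P,\mathcal{F}(P))$ to $\bigl(\tfrac{MQ_1-P}{k},\mathcal{F}(\tfrac{MQ_1-P}{k})\bigr)$ on the outage curve. The feasibility constraint $0\le P\le\frac{MQ_1}{k+1}$ ensures $P\le\frac{MQ_1}{k+1}\le\frac{MQ_1-P}{k}$, so this evaluation abscissa always sits between the chord's endpoints. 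With this picture in hand, the three cases of the lemma correspond to where the threshold $P_a$ sits relative to the chord's upper endpoint, and each is resolved by a direct appeal to Remark \ref{property_type_b_function}.

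For the low-power regime $P_a\ge\frac{MQ_1}{k}$, both endpoints of the chord lie in $[0,P_a]$, since $\frac{MQ_1-P}{k}\le\frac{MQ_1}{k}\le P_a$. Taking $X_1=P$, $X_2=\frac{MQ_1-P}{k}$, $X_3=\frac{MQ_1}{k}$ in part 1 of Remark \ref{property_type_b_function}, the chord at hand lies above the reference chord joining $(0,1)$ to $(X_3,\mathcal{F}(X_3))$ at every abscissa in $[X_1,X_2]$. Reading both chords off at $x=\frac{MQ_1}{k+1}$ yields $f_{k+1}(P)\ge f_{k+1}(0)$, since the reference chord evaluated there is precisely $\frac{1}{k+1}\bigl[1+k\mathcal{F}(\tfrac{MQ_1}{k})\bigr]=f_{k+1}(0)$. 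The minimizer is therefore $P=0$.

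For the high-power regime $P_a<\frac{MQ_1}{k+1}$, the fixed abscissa $X_0=\frac{MQ_1}{k+1}$ satisfies $X_0>P_a$ and still sits between the chord's endpoints. Part 2 of Remark \ref{property_type_b_function} then asserts that the chord lies above the single point $(X_0,\mathcal{F}(X_0))$, giving $f_{k+1}(P)\ge\mathcal{F}\bigl(\tfrac{MQ_1}{k+1}\bigr)$. At the feasible boundary $P=\frac{MQ_1}{k+1}$ the two endpoints of the chord coincide and the chord collapses to that single point, so the lower bound is attained and the uniform allocation is optimal. In the intermediate regime $\frac{MQ_1}{k+1}<P_a<\frac{MQ_1}{k}$ neither global comparison applies for every $P$ in the feasible interval: as $P$ varies the upper endpoint $\frac{MQ_1-P}{k}$ straddles $P_a$, so I would simply fall back on a one-dimensional numerical search over $[0,\frac{MQ_1}{k+1}]$, as the lemma states.

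No deep analytical hurdle is expected; the argument is essentially bookkeeping once the chord picture is set up. The only care needed is in matching the hypotheses of Remark \ref{property_type_b_function}—verifying $X_1\le X_0\le X_2$ in both active cases, $X_3\le P_a$ in the first, and $X_0\ge P_a$ in the second—and in observing that the chord degenerates to a point at $P=\frac{MQ_1}{k+1}$ so the bound in the second case is actually attained. The geometric reinterpretation is what makes the non-convex problem tractable and also explains why a genuinely interior minimizer can arise only in the intermediate window where $P_a$ separates the two natural reference abscissae $\frac{MQ_1}{k+1}$ and $\frac{MQ_1}{k}$.
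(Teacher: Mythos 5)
Your proof is correct and is essentially the paper's own argument: the paper likewise reads $f_{k+1}(P)$ off as the ordinate at the abscissa $x=\frac{MQ_1}{k+1}$ of the chord joining $\left(P,\mathcal{F}(P)\right)$ and $\left(\frac{MQ_1-P}{k},\mathcal{F}\left(\frac{MQ_1-P}{k}\right)\right)$, then invokes parts 1) and 2) of Remark \ref{property_type_b_function} for the two decisive cases exactly as you do (it merely works out $k=M-1$ explicitly and asserts the general $k$ analogously). One remark: your identification of the minimizer in the second case as the uniform point $P=\frac{MQ_1}{k+1}$ (where the chord degenerates and the bound $f_{k+1}(P)\geq\mathcal{F}\left(\frac{MQ_1}{k+1}\right)$ is attained) is the correct reading; the lemma's stated $P=\frac{MQ_1}{k}$ lies outside the feasible interval $\left[0,\frac{MQ_1}{k+1}\right]$ and is evidently a typo that your argument implicitly corrects.
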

\begin{proof}
We choose one particular value of $k=M-1$ as an example to prove this lemma. Consider the following $k +1 = M$ power values with the average power $Q_1$: $P_1 = P$, with $0 \leq P \leq \frac{MQ_1}{k+1}$, and $P_2 , \cdots , P_{k +1}$ all identical to the value $  \frac{MQ_1 - P}{ k}$, as shown in Fig. \ref{opt_solu_ave_power}. It is easy to check that with this power profile, its corresponding average outage probability $f_M(P) = p_a$ is given by the point $T_a = (Q_1, p_a )$, which is the intersection point between the line $x= Q_1$ and the line segment connecting the two points $(P_1, \mathcal{F}(P_1))$ and $(P_2, \mathcal{F}(P_2))$. In particular, when $P_1=P=0$, the outage probability $f_M(0) = p_a^0$ is given by the point $\hat{T}_{a} = (Q_1, p_a^0 )$, which is the intersection point between the line $x= Q_1$ and the line segment connecting the two points $(0, 1)$ and $(\hat{P}_2, \mathcal{F}(\hat{P}_2))$, where $\hat{P}_2 = \frac{MQ_1}{M-1}$. Then, by point 1) in Remark \ref{property_type_b_function}, we obtain $p_a^0 \leq p_a$, and the lemma is thus proved for $k=M-1$. 

\begin{figure}[!t]
\centering
\includegraphics[width=.6 \linewidth]{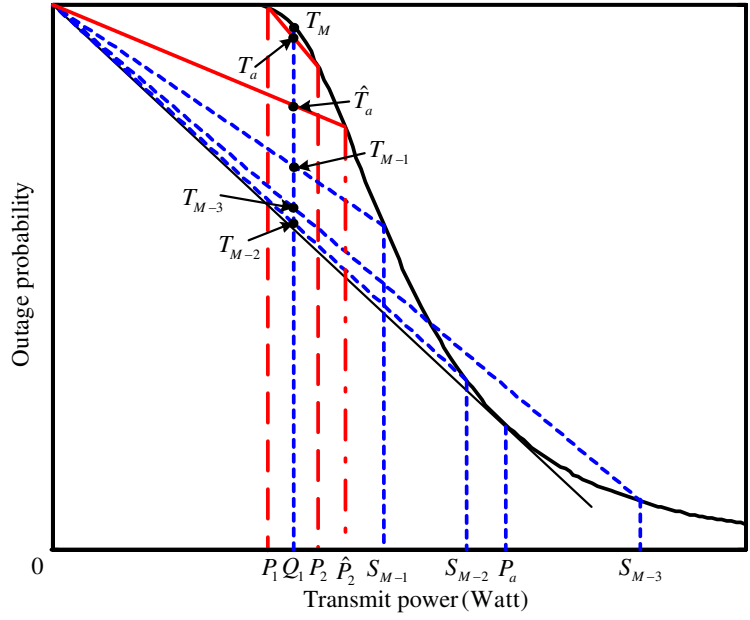}
\caption{Illustration of the optimal solution of Problem (P3).} \label{opt_solu_ave_power}
\end{figure}

Similarly, we consider the case that there are $k+1$ blocks, $1\leq k <  M-1$, and with the average power value $\frac{MQ_1}{k+1}$: We can show that i) when $P_a \geq \frac{MQ_1}{k} $, the above result is also true, and thus we claim that the function $f_{k+1}(P)$ achieves its minimum value when $P=0$; and ii) when $\frac{MQ_1}{k} > P_a$, it can be shown by point 2) in Remark \ref{property_type_b_function} that $f_{k+1}(P)$ achieves its minimum value when $P=\frac{MQ_1}{k}$. For the rest cases that $P_a \in \left(  \frac{MQ_1}{k+1} , \frac{MQ_1}{k}  \right)$, it is not sure where its minimum value is attained and thus an exhaustive search over $0 \leq P \leq \frac{MQ_1}{k+1}$ is needed.
\end{proof}

Now we are ready to prove the case of $Q_1 < P_a$ in Theorem \ref{opt_power_allocation}. From Remark \ref{opt_structure_ave_power}, it is sufficient for us to consider only the case of non-zero values for the optimal power allocation: one block with $\widetilde{P}_0$ and $k$ blocks with identical $\widetilde{P}_1 = \frac{MQ_1 - \widetilde{P}_0 }{k}$, $1 \leq k \leq M$, where $0 \leq \widetilde{P}_0 \leq \frac{MQ_1}{k+1}$. With this power profile, the average outage probability over the whole $M$ blocks is given as $p_{out} (k, \widetilde{P}_0)= \frac{1}{M} \left( k \cdot f_k ( \widetilde{P}_0 ) + M-k \right) $. In particular, denote $p_k=p_{out} (k, 0)$. By Lemma \ref{ave_power_lemma} and noting that $\frac{M-k}{M}$ is a constant for a fixed $k$, it follows that the optimal value of Problem (P3) should be either 1) one of $\left\{ p_k \right\}$, $k=1,\cdots,M$, or 2) the minimum of function $p_{out} (k_0, \widetilde{P}_0)= \frac{1}{M} \left( k_0 \cdot f_{k_0} ( \widetilde{P}_0 ) + M-k_0 \right)$, with $0 \leq \widetilde{P}_0 \leq \frac{MQ_1}{k_0+1}$. 

Then, we define a sequence of power values $\left\{ S_k \right\}$ as $S_k  = \frac{M Q_1}{ k }$, $k=1,\cdots,M$, which serves as the average power value if we equally allocate $MQ_1$ amount of power to $k$ blocks. As shown in Fig. \ref{opt_solu_ave_power}, we connect the point $(S_k, \mathcal{F}(S_k))$, $1 \leq k \leq M$, with the point $(0,1)$; this line segment then intersects with the line $x=Q_1$ at the point $T_{k}= (Q_1, p_k)$, where $p_k$ is defined in the previous paragraph. By Remark \ref{property_type_b_function}, it is easy to observe from Fig. \ref{opt_solu_ave_power} that $p_M > p_{M-1} > \cdots > p_{k_0+1}$ and $p_{k_0} < p_{k_0 -1} < \cdots < p_{1}$, where $k_0$ is given by (\ref{search_k0}). Therefore, the optimal $\widehat{P}_0$ is given by (\ref{search_Power}), and the range $\mathcal{P}$ is obtained by further considering Proposition \ref{prop_one_low_n1}. Thus, for the case of $Q_1 < P_a$, this theorem is proved. Note that power allocation for the case of $Q_1 \geq P_a$ is only a special case of that for $Q_1 < P_a$, in the sense of letting $k_0 =0$, and thus the proof is similar and omitted for brevity. Therefore, Theorem \ref{opt_power_allocation} is proved.

\color{black}

\section{Algorithm for the Optimal Power Allocation of Problem (P1) with $N>1$} \label{deter_EH_opt}

Assume that before the $i$-th EH period, the optimal power profile is obtained as $P_{p,q}^*$, $1 \leq p< i$ and $1 \leq q \leq M$. Compute the index $i_s$ corresponding to the next possible power exhausting\footnote{The power exhausting EH period means that at the end of this EH period, all EH rate is consumed.} EH period after the $i$-th EH period, and the possible constant power allocation value $\widehat{P}_{i}$ from the $i$-th to the $i_s$-th EH period, which are given by
\begin{align}
i_s & = \arg \min_{i \leq n \leq N } \left\{ \frac{Q_0 + \sum_{k=i}^n Q_k }{  n-i+1  }  \right\}, \label{deterministic_first-step1}\\
\widehat{P}_i & = \frac{ Q_0 + \sum_{k=i}^{i_s} Q_k  }{ i_s-i+1 }, \label{deterministic_first-step2}
\end{align}
where $Q_0$ is the residual power up to the $i$-th EH period with $Q_0 = \sum_{k=1}^{i-1} Q_k$ if $P_{n,m}^*=0$ for $1 \leq n <i$ and $1 \leq m \leq M$. It will be shown later that if there exists $P_{n,m}^*> 0$ for $1 \leq n <i$ and $1 \leq m \leq M$, we have $Q_0 = 0$, due to the fact that the transmitter will only save power before certain EH period and after that, it will transmit with its best effort. Then, we have the following possible cases:
\begin{enumerate}
\item Case I: If $\widehat{P}_{i} \geq P_a $, it is claimed that the source should transmit with its best effort, and the optimal power values are given as
\begin{align} \label{full_power}
P^*_{p,q} =\widehat{P}_i ,(i,1) \leq (p,q) \leq (i_s,M),
\end{align}
where $(n_1,m_1) \leq (n_2,m_2)$ is defined as one of the following two cases: i) $n_1 < n_2$; and ii) $n_1 = n_2$ and $m_1 \leq m_2$, with the equality achieved only when $n_1 = n_2$ and $m_1 = m_2$; set $i = i_s+1$, and continue the search procedure.

\item Case II: If $\widehat{P}_{i} < P_a $, we need to determine a search region (from the $i$-th to the $i_{s+\widetilde{t}}$-th EH period, $\widetilde{t} \geq 0$), where a power value below $P_b$ exists and the result in Case 2) of Theorem \ref{opt_power_allocation} can be implemented. Then, we recursively compute the index $i_{s+t}$, $t=1,2,\cdots$, of the next possible power exhausting EH period after the $i_{s+t-1}$-th EH period and the corresponding power value $\widehat{P}_{i+t}$ from the $i_{s+t-1}$-th to the $i_{s+t}$-th EH period as:
\begin{align}
i_{s+t} & = \arg \min_{i_{s+t-1}+1 \leq n \leq N } \left\{ \frac{ \sum_{k=i_{s+t-1}+1}^n Q_k }{  n - i_{s+t-1}  }  \right\}, \label{deterministic_first-step3} \\
\widehat{P}_{i+t} & = \frac{ \sum_{k=i_{s+t-1}+1}^{i_s} Q_k  }{ i_{s+t}-i_{s+t-1} }. \label{deterministic_first-step4}
\end{align}
Then, we consider the following two cases:

\begin{enumerate}
\item If there does not exist any $t_0 \geq 1$ such that
\begin{align} \label{exist_t_0}
\widehat{P}_{i+t_0 } < P_a,~\widehat{P}_{i+t_0+1} \geq P_a,
\end{align}
we obtain that $\widehat{P}_{N} < P_a$, and thus claim that $i_{s+\widetilde{t}} = N$. Then, we apply the same searching procedure as shown in Case 2) of Theorem \ref{opt_power_allocation} among the $i$-th to the $i_{s+\widetilde{t}}$-th EH periods with a total energy budget of $M \sum_{t=0}^{\widetilde{t}} \left( i_{s+t} - i_{s+t-1}  \right) \widehat{P}_{i+t}$, and compute the index $(i_0,j_0)$ corresponding to the unique communication block with positive power value less than $P_b$ as follows:
\begin{align}
i_0 & = i_{s+\widetilde{t}} - \left\lfloor \frac{k_0}{M} \right\rfloor, \label{posi_power_index_i} \\
j_0 & = M - \left( k_0 - \left\lfloor \frac{k_0}{M} \right\rfloor M \right)  , \label{posi_power_index_j}
\end{align}
where
\begin{align} \label{posi_power_index_k0}
k_0   = \left\lfloor \frac{ M \sum_{t=0}^{\widetilde{t}} \left( i_{s+t} - i_{s+t-1}  \right) \widehat{P}_{i+t} }{P_a} \right\rfloor,
\end{align}
with $i_{s-1} = i-1$. Next, the optimal power value $\widetilde{P}_0$ of the $(i_0,j_0)$-th communication block needs to be within the interval $\left[ 0, P_0 \right]$, where
\begin{align} \label{opt_alg_scale}
P_0 = \frac{( i_{s+\widetilde{t}} - i +1) M \widehat{P}_i }{ k_0 +1 } ,
\end{align}
and the power values from the $\left( (i_0,j_0)+1 \right)$-th to the $(i_{s+\widetilde{t}},M)$-th communication blocks are identical, denoted as $\widetilde{P}_1$. Due to Proposition \ref{prop_one_low}, we only need to search $\widetilde{P}_0$ over the region $\mathcal{P} = [0, P_0]/[P_b,\max(P_b,P_0))$. Then, $\widetilde{P}_0$ is given as the optimal point of the following one-dimensional outage probability minimization problem
\begin{align}
p_{\widetilde{t}} =  \min_{ P \in \mathcal{P}}  \mathcal{F} (P) + k_0  \mathcal{F} \left(  P_0 + \frac{P_0 - P}{ k_0 } \right) \label{search_posi}.
\end{align}
Furthermore, the power values from the $(i_0,j_0+1)$-th to the $(N,M)$-th communication blocks are given as
\begin{align} \label{search_posi_p1}
\widetilde{P}_1 = P_0 + \frac{P_0 - \widetilde{P}_0}{ k_0 }.
\end{align}
Therefore, the optimal power allocation from the $i$-th to the $i_{s+\widetilde{t}}$-th EH periods is given as
\begin{align} \label{power_allo_case1}
P_{p,q}^* = \left\{
      \begin{array}{ll}
        0, & (i,1) \leq (p,q) \leq (i_0,j_0)-1 \\
        \widetilde{P}_0 , & (p,q) = (i_0,j_0) \\
        \widetilde{P}_1 , & (i_0,j_0)+1 \leq (p,q) \leq (i_{s+\widetilde{t}},M)
      \end{array}
    \right.
,
\end{align}
where $(n,m)-1$ is defined as the index of the previous communication block before the $(n,m)$-th one.

\item \label{N_larger_one_dim}If such a $t_0$ exists to satisfy (\ref{exist_t_0}), we claim that the only positive optimal power value below $P_a$ may appear among the $i$-th to the $i_{s+ \widetilde{t}}$-th EH periods, with $\widetilde{t}= t_0$, whose index $(i_0,j_0)$ can be computed by using (\ref{posi_power_index_i}) and (\ref{posi_power_index_j}) with $ \widetilde{t}  = t_0$. Accordingly, $\widetilde{P}_0$ and $\widetilde{P}_1$ can also be computed by using (\ref{search_posi}) and (\ref{search_posi_p1}), respectively. With the obtained parameters, we need to further check whether they are optimal. To do this, we should check whether Steps (\ref{opt_case1}) and (\ref{opt_case2}) shown in below are satisfied or not, and update the parameter $\widetilde{t}$ for which the result in Case 2) of Theorem \ref{opt_power_allocation} can be applied (If it is updated, a larger-scale search is needed.) and the corresponding power allocation profile.

%whether the non-decreasing property and the optimal condition are satisfied, which are specified as the following Steps (\ref{opt_case1}) and (\ref{opt_case2}). If neither of them is satisfied, we can continue with Step (\ref{opt_case3}) to obtain the optimal power profile.
\begin{enumerate}
\item \label{opt_case1}If $(i_0,j_0) = (i_{s+\widetilde{t}},M)$ or $\widetilde{P}_1 > \widehat{P}_{i+ \widetilde{t} +1}$, we should further search over the $i$-th to the $i_{s+\widetilde{t}+1}$-th EH periods. It is noted that: (1) If the former subcase happens, the total sum EH rate up to the $i_{s+\widetilde{t}}$-th EH period is less than $P_b$, which means that we actually have not done the one-dimension search (as in Case 2) of Theorem \ref{opt_power_allocation}) from the $i$-th to the $i_{s+\widetilde{t}}$-th EH period, and thus we need to enlarge the searching scope to find the optimal solution; and (2) if the latter subcase happens, it cannot be optimal due to the fact that if $\widetilde{P}_1 > \widehat{P}_{i+\widetilde{t}+1}$, we can save power for the $(i_{s+\widetilde{t}}+1)$-th to the $i_{s+\widetilde{t}+1}$-th EH periods to decrease the average outage probability, since over the power regime $\left[ \widehat{P}_{i+\widetilde{t}+1}, \widetilde{P}_1 \right]$, the outage probability function is convex.

Then, we update $\widetilde{t} = \widetilde{t}+1$, and compute $\widetilde{P}_0$ and $\widetilde{P}_1$ again by using (\ref{opt_alg_scale}), (\ref{search_posi}), and (\ref{search_posi_p1})\footnote{Here, we do not need to update $(i_0,j_0)$, since both the previously obtained $\widetilde{P}_1$ and $\widehat{P}_{i+\widetilde{t}+1}$ can be shown to be larger than $P_a$.}. We repeatedly check whether the above conditions are satisfied or not by the newly obtained $(i_0,j_0) $, $\widetilde{P}_0$, and $\widetilde{P}_1$, until the $N$-th EH period. If the $N$-th EH period is reached, we claim that the obtained power allocation profile is optimal; otherwise, we go on to check the next Step (\ref{opt_case2}).

\item \label{opt_case2}If $(i_0,j_0) <(i_s,M)$, $\widetilde{P}_1 \leq \widehat{P}_{i+\widetilde{t}+1}$, and $\widetilde{P}_0 > k_0 (\widehat{P}_{i+\widetilde{t}+1} - \widetilde{P}_1)$, it is claimed that the obtained $(i_0,j_0)$ is the location of the last communication block with power value less than $P_b$, while its corresponding power value $P_{i_0,j_0}$ may not be optimal. Under this condition, it can be shown that the optimal power value should satisfy $P_{i_0,j_0}^* \leq P_{i_0,j_0}$. Thus, as we decrease $P_{i_0,j_0}=\widetilde{P}_0 $ by $\Delta$, $0 < \Delta \leq \widetilde{P}_0$, and correspondingly increase $\widetilde{P}_1$ by $\frac{\Delta}{k_0}$, we can possibly make $\widetilde{P}_1 + \frac{\Delta}{k_0} >  \widehat{P}_{i+\widetilde{t}+1}$ for some $\Delta$, which means that a larger-scale search (up to the $i_{s+\widetilde{t}+1}$-th EH period) may help with decreasing the outage probability. Since it can be shown that the obtained $(i_0,j_0)$ are fixed, $P_{i_0,j_0}$ can only be changed over the region $[0,\widetilde{P}_0]$.

Next, we try this larger-scale search, and check whether it is optimal. We initialize a variable $temp = \widetilde{t}$, and use it to store the index corresponding to the search with the lowest outage performance. Then, let $\widetilde{t} = \widetilde{t}+1$, and update the parameters $\widetilde{P}_0$ and $\widetilde{P}_1$ again by using (\ref{opt_alg_scale}), (\ref{search_posi}), and (\ref{search_posi_p1}). We claim that only when both of the following two conditions are satisfied, this search can improve the outage performance, and update $temp = \widetilde{t}$: (i) The EH constraint at the $(i_{s+ \widetilde{t}-1},M)$-th communication block should be satisfied by the newly obtained power profile, i.e.,
\begin{align} \label{conditions1}
\widetilde{P}_0 + \left(  (i_{s+ \widetilde{t}-1} - i_0+1 ) M - j_0 \right)\widetilde{P}_1 \leq M \sum_{i=1}^{i_{s+ \widetilde{t}-1}} Q_i.
\end{align}
This is due to the fact that in (\ref{posi_power_index_k0}), we allocate the power without checking the feasibility of this constraint; and (ii) the newly derived power profile leads to a smaller outage probability over the $i$-th to the $i_{s+\widetilde{t} }$-th EH periods, i.e.,
\begin{align} \label{conditions2}
p_{temp} + \sum_{r=temp+1}^{\widetilde{t}} \left( i_{s+r} - i_{s+r -1}  \right) M \mathcal{F} (\widehat{P}_{i+r})  >  p_{\widetilde{t}}.
\end{align}
We should repeat step (\ref{opt_case2}) until either the $N$-th EH period is reached or the above conditions (i)-(ii) are not satisfied. Then, $temp$ provides the value of $\widetilde{t}$ as needed, i.e., let $\widetilde{t} = temp$, and go to Step (\ref{opt_case3});

\item \label{opt_case3}Now, we use the obtained value of $\widetilde{t}$ to compute the optimal solution as given by (\ref{power_allo_case1}).

\end{enumerate}

Then, Step (\ref{N_larger_one_dim}) is finished. Let $i= i_{s+\widetilde{t}}+1$, and repeat the searching procedure.

\end{enumerate}

\end{enumerate}

Since there is at most one communication block with power value less than $P_b$, the searching procedure for Step (\ref{N_larger_one_dim}) needs to be implemented only once. As such, if this case has occurred, we can repeat the searching procedure in Case 1) only. It is easy to verify that the optimal solution of Problem (P1) obtained by Algorithm \ref{opt_power_allocation_alg} is non-decreasing over time, and corresponds to a ``save-then-transmit'' structure: When the source initially does not have sufficient power, i.e., much smaller than $P_a$, to support the communication rate $R$ with low outage probabilities, it keeps silent to save energy until the cumulated power become sufficiently close to $P_a$; afterwards, it starts transmitting continuously with non-decreasing power.

\section{Proof of Theorem \ref{deter_optimal_proof}} \label{proof_final}

Denote the optimal solution of Problem (P1) as $\{P_{i,j}^*\}$, and the power profile obtained by Algorithm \ref{opt_power_allocation_alg} as $\{P_{i,j}^{\star}\}$. For $\{P_{i,j}^{\star}\}$, we denote the indices of the power changing blocks as $(1,1) \leq (i_1,j_1)<\cdots<(i_n,j_n)<(i_{n+1},j_{n+1})= (N,M)$, which means that from the $(1,1)$-th to the $((i_1,j_1)-1)$-th communication block, we have $P_{i,j}^{\star} = 0$, and from the $(i_{k-1},j_{k-1})$-th, $2\leq k \leq n+1$, to the $((i_{k},j_{k})-1)$-th communication blocks, the optimal power values $P_{i,j}^{\star}$'s keep identical. To simplify the proof, we assume that $(i_2,j_2) = (i_1,j_1)+1$, and $(i_1,j_1)$ denotes the location of the one possible communication block with power value less than $P_b$; however, for the case that all positive power values are no smaller than $P_b$, this assumption will not introduce any trouble in the following proof. Moreover, by Proposition \ref{prop_one_high}, we know that the EH constraint at the $((i_k,j_k)-1)$-th, $3 \leq k\leq n+1$, communication block is satisfied with equality. Similarly, we define the indices of the power changing blocks for the power profile $\{P_{i,j}^*\}$ as $\{(i'_k,j'_k)\}$ for $0 \leq k \leq m$.

We assume that before the $(p,q)$-th communication block, the above defined two power profiles are the same, and their first difference appears at the $(p,q)$-th communication block, i.e., $P_{p,q}^* \neq P_{p,q}^{\star}$. Then, we consider the following three cases:
\begin{enumerate}
\renewcommand{\theenumi}{\arabic{enumi}}
\renewcommand{\theenumii}{\arabic{enumii}}
\renewcommand{\theenumiii}{\arabic{enumiii}}
\renewcommand{\theenumiv}{\arabic{enumiv}}

\makeatletter
\renewcommand{\p@enumii}{\theenumi.}
\renewcommand{\p@enumiii}{\theenumi.\theenumii.}
\renewcommand{\p@enumiv}{\theenumi.\theenumii.\theenumiii.}
\makeatother

\item \label{proof_opt_case1}If $(p,q) < (i_1,j_1)$, it is easy to see that $P_{p,q}^{\star}=0$ and only $P_{i_1',j_1'}^* = P_{p,q}^* > P_{p,q}^{\star}$ can occur. There are then three possible subcases:
\begin{enumerate}[a)]
\item If $(i'_3,j'_3) > (i_3,j_3) $, it is easy to see that there exists $(i^0,j^0)$ with $(i_2',j_2') \leq (i^0,j^0)  <(i_3,j_3) $ such that $P_{i^0,j^0}^* < P_{i^0,j^0}^{\star}$, since the EH constraint at the $(i_3,j_3)$-th communication block is achieved with equality for $\{ P_{i,j}^{\star} \}$. Then, it follows that $ P^*_{i'_3,j'_3-1} = P^*_{i^0,j^0} < P^{\star}_{i^0,j^0} = P^{\star}_{i_3,j_3-1} \leq P^{\star}_{i'_3,j'_3-1}$\footnote{$P_{n,m-1}$ denotes the power of the $((n,m)-1)$-th communication block, similar for $P_{n,m+1}$.}, where the first equality is due to the assumption that $\{ P^*_{i,j} \}$ keep identical among the $(i'_2,j_2')$-th to the $((i_3',j_3')-1)$-th communication blocks. Thus, by the obtained $ P^*_{i'_3,j'_3-1} < P^{\star}_{i'_3,j'_3-1}$, it is easy to see that for $\{ P^*_{i,j} \}$, the EH constraint at the $((i'_3,j'_3)-1)$-th communication block cannot be achieved with equality, which contradicts with the assumption. Therefore, this subcase cannot occur.

\item If $(i'_3,j'_3) = (i_3,j_3) $, it is obvious that $\{ P^*_{i,j} \}$ cannot be optimal due to the search procedure of Algorithm \ref{opt_power_allocation_alg} and Theorem \ref{opt_power_allocation}.

\item If $(i'_3,j'_3) < (i_3,j_3) $, we check the power profiles from the $(i'_3,j'_3)$-th to the $((i_3,j_3)-1)$-th communication blocks, and denote the index of the last power changing block within this region for power profile $\{ P_{i,j}^* \}$ as $(i^0,j^0)$. First, we show that $P_{i_3,j_3-1}^* \leq P_{i_3,j_3-1}^{\star}$. This can be proved by contradiction: Assume that $P_{i_3,j_3-1}^* > P_{i_3,j_3-1}^{\star}$, and by the non-decreasing property of $\{ P_{i,j}^{\star} \}$, it follows that $P_{i,j}^* = P_{i_3,j_3-1}^* > P_{i_3,j_3-1}^{\star} \geq P_{i,j}^{\star}$ for $(i^0,j^0) \leq (i,j) \leq ((i_3,j_3)-1)$, or in other words,
\begin{align} \label{contra1}
\sum_{(i^0,j^0) \leq (i,j) \leq ((i_3,j_3)-1)} P_{i,j}^* >  \sum_{(i^0,j^0) \leq (i,j) \leq ((i_3,j_3)-1)} P_{i,j}^{\star}.
\end{align}
Since at the $((i^0,j^0)-1)$-th communication block, the EH constraint is achieved with equality for $\{P_{i,j}^{*}\}$ and may not for $\{P_{i,j}^{\star}\}$, it is obtained that
\begin{align} \label{contra2}
\sum_{(i^0,j^0) \leq (i,j) \leq ((i_3,j_3)-1)} P_{i,j}^{*}  \leq  \sum_{(i^0,j^0) \leq (i,j) \leq ((i_3,j_3)-1)}  Q_i \leq \sum_{(i^0,j^0) \leq (i,j) \leq ((i_3,j_3)-1)} P_{i,j}^{\star},
\end{align}
which contradicts with (\ref{contra1}). Therefore, we conclude that $P_{i_3,j_3-1}^* \leq P_{i_3,j_3-1}^{\star}$.

Then, we claim that for $\{ P_{i,j}^* \}$, the EH constraint at the $((i_3,j_3)-1)$-th communication block is achieved with equality. We prove this result by contradiction. Assume that for $\{ P_{i,j}^* \}$, the EH constraint at the $((i_3,j_3)-1)$-th communication block is not achieved with equality. Thus, by Proposition \ref{prop_one_high}, it follows that $ P_{i_3,j_3-1}^* = P_{i_3,j_3}^* $. Moreover, since $P_{i_3,j_3}^{\star} \geq P_{i_3,j_3-1}^{\star} \geq P_{i_3,j_3-1}^* = P_{i_3,j_3}^*$, it follows that for $\{ P_{i,j}^* \}$, the EH constraint at the $(i_3,j_3)$-th communication block is not achieved with equality. Repeat the above argument, and we conclude that for $\{ P_{i,j}^* \}$, the EH constraint after the $((i_3,j_3)-1)$-th communication block is not achieved with equality, which means that $\{ P_{i,j}^* \}$ cannot be optimal and shows the contradiction.

Finally, we only focus on the $(1,1)$-th to the $((i_3,j_3)-1)$-th communication blocks. Since for both $\{ P_{i,j}^* \}$ and $\{ P_{i,j}^{\star} \}$, the EH constraint at the $((i_3,j_3)-1)$-th communication block is achieved with equality, we can apply the result in Theorem \ref{opt_power_allocation}, and show that $\{ P_{i,j}^* \}$ is not optimal.

\end{enumerate}

\item If $(p,q)=(i_1,j_1)$, we consider the following two sub-cases:

\begin{enumerate}[a)]
  \item $P_{p,q}^* > P_{p,q}^{\star}$: This case cannot be optimal due to similar arguments as those in Case \ref{proof_opt_case1}).

  \item $P_{p,q}^* < P_{p,q}^{\star}$: For this case, we consider the following subcases:
      \begin{enumerate}[i)]
        \item \label{opt_proof_EH_case_equ}$P_{p,q}^* > 0$: Compared to $\{P_{i,j}^{\star}\}$, the power profile $\{P_{i,j}^*\}$ saves some power at the $(p,q)$-th communication block. It is easy to see that the saved power with the amount $P_{p,q}^{\star} - P_{p,q}^*$ is allocated to the blocks after the one with index $(i_1,j_1)$ in a ``waterfilling'' manner with the baseline given by  $\{ P_{i,j}^{\star}\}$, $(i,j)> (i_1,j_1)$: First, we equally allocate $P_{p,q}^{\star} - P_{p,q}^*$ to the blocks with the indices from $(i_2,j_2)$ to $(i_3,j_3)-1$ (due to Proposition \ref{prop_one_high}), and still keep these blocks with identical power allocation. Only when the newly obtained $P_{i,j}^*$, $(i_2,j_2) \leq (i,j) \leq (i_3,j_3)-1$, exceeds the power values of the blocks with the indices from $(i_3,j_3)$ to $(i_4,j_4)-1$, the power values of the blocks from $(i_2,j_2)$ to $(i_4,j_4)-1$ should be maken identically, and so on. In other words, we claim that there exists $k$, $3 \leq k \leq n+1$, such that $(i_k,j_k) = (i'_3,j'_3)$. Thus, for $\{P_{i,j}^*\}$, the power allocation problem from the $(i'_1,j'_1)$-th to $((i'_3,j'_3)-1)$-th blocks corresponds to one single iteration in Step (\ref{opt_power_EH_search_p0}) of Algorithm \ref{opt_power_allocation_alg}, which is shown to be not optimal by the searching procedures.

        \item $P_{p,q}^* = 0$: We define a new power allocation problem starting from the $(i_2,j_2)$-th to the $(N,M)$-th communication blocks by setting $P_{i,j}=0$, $(i,j) < (i_2,j_2)$ for Problem (P1). It is easy to observe that: (A) for the optimal power profile of this newly defined problem, all power values are no smaller than $P_b$, since $P^{\star}_{i,j} \geq P_b$, $(i,j) \geq (i_2,j_2)$; and (B) there exists $k$, $3 \leq k \leq n+1$, such that the optimal power profile of the newly defined problem after the $(i_k,j_k)$-th communication block is the same as the corresponding part of $\{P^{\star}_{i,j}\}$ (due to similar argument as Case (\ref{opt_proof_EH_case_equ})). Thus, by (A), we conclude that the optimal value of the newly defined problem is no larger than the average outage probability given by $\{ P^*_{i,j}\}$ (note that the power allocation of the newly defined problem is operated on the convex region of its objective function and the arguments of Theorem 1 in \cite{yang} can be applied to show this result); by (B), considering the average outage probability minimization over the whole $NM$ communication blocks, we conclude that $\{ P^{\star}_{i,j} \}$ outperforms the power profile obtained by solving the newly defined problem (Since the newly defined power allocation problem is only a special case of the last iteration of Step (\ref{opt_power_EH_search_p0}) in Algorithm \ref{opt_power_allocation_alg} by choosing $P=0$ in (\ref{search_posi})). Thus, this case cannot be optimal.

      \end{enumerate}
\end{enumerate}

\item If $(p,q) > (i_1,j_1)  $, it follows that $P^*_{i,j} \geq P_b$ and $P^{\star}_{i,j} \geq P_b$ for $(i,j) \geq (p,q)$, which means that the power allocation is always in the convex regime of the objective function. By a similar argument as Theorem 1 in \cite{yang}, we know that this case cannot occur.

\end{enumerate}

By combining the above discussions, Theorem  \ref{deter_optimal_proof} is proved.

% use section* for acknowledgement

% that's all folks
\end{document}